\documentclass[11pt]{article}

\setlength{\oddsidemargin}{0pt}
\setlength{\evensidemargin}{0pt}
\setlength{\topmargin}{-0.5in}
\setlength{\textwidth}{6.5in}
\setlength{\textheight}{9in}

\usepackage{xspace}             
\usepackage{hyperref}

\usepackage{amsmath}
\usepackage{amsthm}
\usepackage{amsfonts}
\usepackage{amssymb}

\usepackage{authblk} 
\usepackage{caption}
\usepackage{subcaption}
\usepackage{xcolor,cancel}

\usepackage{graphicx}           
\usepackage{import}             
\usepackage{epstopdf}           

\usepackage{cleveref}           








\newtheorem{theorem}{Theorem}
\newtheorem{lemma}{Lemma}
\newtheorem{corollary}{Corollary}
\newtheorem{definition}{Definition}

\newtheorem{sublemma}{Lemma}[section]
\newtheorem{subdefinition}{Definition}[section]

\newcommand{\setNealCounter}[1]{\setcounter{#1}{\value{section}}\addtocounter{#1}{-1}}
\newcommand{\setNealCounters}
{\setNealCounter{theorem}\setNealCounter{lemma}\setNealCounter{corollary}\setNealCounter{definition}}

\newcommand{\E}{\operatorname{E}}

\newcommand{\opt}{\ensuremath{\operatorname{\mbox{\sc opt}}}\xspace}
\newcommand{\Opt}{\ensuremath{\operatorname{\mbox{\sc Opt}}}\xspace}

\newcommand{\lb}{\ensuremath{\operatorname{\mbox{\sc lb}}}\xspace}

\newcommand{\brb}{\ensuremath{\operatorname{\mbox{\sc brb}}}\xspace}
\newcommand{\Brb}{\ensuremath{\operatorname{\mbox{\sc Brb}}}\xspace}

\newcommand{\brbx}[1]{\ensuremath{\brb_{#1}}\xspace}
\newcommand{\Brbx}[1]{\ensuremath{\Brb_{#1}}\xspace}

\newcommand{\brbk}{\brbx{K}}
\newcommand{\Brbk}{\Brbx{K}}

\newcommand{\bmc}{\ensuremath{\operatorname{\mbox{\sc bmc}}}\xspace}
\newcommand{\Bmc}{\ensuremath{\operatorname{\mbox{\sc Bmc}}}\xspace}
\newcommand{\bmcf}{\ensuremath{\bmc_f}\xspace}

\newcommand{\bmcfd}{\ensuremath{\bmc_{f_d}}\xspace}

\newcommand{\linearBmc}{{\sc linear} \bmc}
\newcommand{\latencyBmc}{\ensuremath{\bmc_{\le K}}\xspace}
\newcommand{\LinearBmc}{{\sc Linear} \bmc}
\newcommand{\LatencyBmc}{\ensuremath{\Bmc_{\le K}}\xspace}

\newcommand{\newfn}[1]{\operatorname{\sf #1}}

\newcommand{\cost}{\newfn{cost}}

\newcommand{\latency}{\newfn{latency}}

\newcommand{\leftDepth}{\newfn{left\_depth}}
\newcommand{\rightDepth}{\newfn{right\_depth}}

\newcommand{\rbar}{{\overline r}}
\newcommand{\lbar}{{\overline \ell}}

\newcommand{\calI}{\ensuremath{{\cal I}}\xspace}

\newcommand{\eps}{\epsilon}

\newcommand{\Rp}{\mathbb{R}_{+}}

  \title{Online Bigtable merge compaction}

\newcommand{\AFFIL}[2]{\affil{\protect\makebox[5in]{#1\hfill \small\tt #2}}}

\author{Claire Mathieu}
\AFFIL{CNRS, Paris}{cmathieu\,@\,di.ens.fr}
\author{Carl Staelin}
\AFFIL{Google Israel Engineering Center, Haifa}{staelin\,@\,google.com}
\author{Neal E. Young\,\thanks{Supported
    by Google research award
    \emph{A Study of Online Bigtable-Compaction Algorithms}, and NSF grant 1117954.}}
\AFFIL{University of California, Riverside}{neal.young\,@\,ucr.edu}
\author{Arman Yousefi\,\protect\footnotemark[1]}
\AFFIL{University of California, Los Angeles}{armany@cs.ucla.edu}

\date{}

\begin{document}

\maketitle
\begin{abstract}
  NoSQL databases are widely used
  for massive data storage and real-time web applications.
  Yet important aspects of these data structures are not well understood.
  For example, NoSQL databases write most of their data to a collection
  of files on disk, meanwhile periodically \emph{compacting} subsets of these files.
  A \emph{compaction policy} must choose which files to compact, and when to compact them,
  without knowing the future workload.
  Although these choices can affect computational efficiency by orders of magnitude,
  existing literature lacks tools for designing and analyzing online compaction policies ---
  policies are now chosen largely by trial and error.

  Here we introduce tools for the design and analysis of compaction policies for Google Bigtable,
  propose new policies, give average-case and worst-case competitive analyses,
  and present preliminary empirical benchmarks.
  
 \end{abstract}

\thispagestyle{empty}
\titlepage


\section*{Introduction 
  --- NoSQL databases and BigTable compaction}
NoSQL databases provide distributed, reliable, high-volume, real-time data storage.
Companies making heavy use of NoSQL systems include
Adobe, Ebay, Facebook, GitHub, Meetup, Netflix, and Twitter.
At Google, \emph{BigTable} servers support applications
such as Gmail, Maps, Search, Crawl, Google+, Analytics, and Base.
Published data (most recently from 2006) show over 24,500 BigTable servers,
supporting over 1.2 million requests per second and 16 GB/s of outgoing RPC traffic,
and holding over a petabyte of data for Google Crawl and Analytics alone~\cite[\S 8]{chang2008bigtable}.

For a general introduction to NoSQL,
see~\cite{cattell2011scalable,redmond2012seven,strauch2011nosql}. 
Roughly, NoSQL databases support reads and writes of key/value pairs.
Almost all modern NoSQL systems employ a ``Log-Structured-Merge'' (LSM) architecture:
a cache holds recent writes, which are periodically aggregated and pushed to immutable disk files.
This is in contrast to traditional DBMSs, 
which update data files in place, leading to slower insertions and updates.
LSM systems organize their files in \emph{levels}
by partitioning time into intervals and storing all writes from a particular interval in one level.
The most recent level (ending at the current time) is held in the cache.
Each remaining level is held on disk, either in a single file
or, by a partition of the key space, in multiple files.
Periodically, the cache is dumped to disk, creating a new level.
(The cache may be dumped for various reasons, not just when it is full.)
The time per read grows with the number of levels
--- a typical read searches the levels, most recent first,
checking one file in each level until the desired key is found.
To keep the number of levels bounded, contiguous levels are periodically merged.
This merge process is referred to as \emph{compaction}.
Compaction and read operations together
account for a significant fraction of the computing resources used by the system, 
and can be the main bottleneck~\cite[\S 7]{chang2008bigtable}.

Here we focus on improving the efficiency of compaction and reads.
We focus on Google's BigTable database,
but the proposed principles may also be applied to other LSM storage systems,
most immediately to those that, like Bigtable, use just one file per level
(e.g.~Accumulo~\cite{kepner_achieving_2014,patil_ycsb++:_2011},
AsterixDB~\cite{alsubaiee2014asterixdb}, 
HBase~\cite{patil_ycsb++:_2011,george_hbase:_2011,khetrapal_hbase_2006},
Hypertable~\cite{khetrapal_hbase_2006,judd_scale_2008}, and
Spanner~\cite{corbett2013spanner}).
We develop techniques for the design and analysis of compaction policies,
analyze new policies using worst-case and average-case competitive analyses,
give absolute estimates of optimal costs, and present preliminary benchmarks.

\smallskip

This is the first formal study of online compaction policies that we know of.\footnote
{Ghosh et al.~study the related but quite different problem of 
performing a single offline compaction via a sequence of merges,
given a constraint on the number of files that can be merged at once.
That problem is NP-hard~\cite{ghosh_fast_2015}.
As far as we know, NoSQL is not yet studied in the large literature on
\emph{external-memory} algorithms~\cite{arge_external-memory_2010,vitter_external_2001}.}

\paragraph{Formal definition of Bigtable merge compaction (BMC).}
Formally, for any non-decreasing \emph{read-cost function} $f:\Rp\rightarrow\Rp$, 
define \bmcf as follows.
The input is a sequence $\calI = \langle (\ell_t,r_t)\rangle_t \in (\Rp\!\times\Rp)^n$.
The algorithm maintains a stack of lengths, initially empty.
At time $t$, the pair $(r_t, \ell_t)$ is revealed,
where $r_t$ is the \emph{read rate} and $\ell_t$ is the length at time $t$
(representing the length of the new disk file created from a cache dump).
The length $\ell_t$ is inserted at the top of the stack.
The algorithm $A$ then chooses a \emph{compaction}:
it selects some contiguous sequence of lengths at the top of the stack, 
then adds them to get a single new length $L_t$, which replaces them in the stack.
At time $t$, the \emph{merge cost} is $L_t$; the \emph{read cost} is $r_t\, f(k_t)$,
where $k_t$ is the stack size after the compaction at time $t$.
The output, called a \emph{schedule}, 
is the sequence $\sigma$ of $n$ compactions.
The cost of $\sigma$ on $\calI$, denoted $\sigma(\calI)$ or $A(\calI)$, is
$\sum_{t=1}^n L_t + r_t\,f(k_t)$.
\Cref{fig:example} shows an example schedule. 

\begin{figure}\sf\footnotesize
  \newcommand{\tmpA}[1]{\longleftarrow & \text{\footnotesize #1}}
  \newcommand{\tmpB}[1]{& \text{\footnotesize #1}}
  \newcommand{\SLOT}[2]{%
    \lefteqn{\raisebox{12pt}[0pt][0pt]{\tiny~#1}}\raisebox{-0.5pt}{\makebox[0.9em][r]{#2}}
  }
  \newcommand{\SLOTSS}[4]{%
    \begin{array}{|c|c|c|c|} \hline
     \SLOT{1}{#1} & \SLOT{2}{#2} &  \SLOT{3}{#3} & \SLOT{4}{#4} \\ \hline 
    \end{array}
  }
 \newcommand{\SLOTS}[4]{%
   \begin{array}{|c|c|c|c|} \hline
     \SLOT{}{#1} & \SLOT{}{#2} &  \SLOT{}{#3} & \SLOT{}{#4} \\ \hline 
   \end{array}
 }
 \vspace*{-0.1in}
 \hspace*{-1.5em}
 \begin{equation*}
   \begin{array}{rrlll@{}l}~
     &\small \text{stack before time } t: & \SLOTSS{80}{50}{9}{\bf 5} 
     \\[-15pt]
     &&&& \tmpA{Before time $t$, stack has 4 files, top file has length 5.}

     \\[2pt] &&& {~~~\ell_t = \mathbf 3}
     &\tmpA{At time $t$, new file of length $3$ is added to top,}
     \\ &&&&\tmpB{algorithm merges 3rd, 4th, and new file; pays 9+5+3.}

     \\[-12pt]& \text{stack after time } t:&\SLOTS{80}{50}{\bf 17}{}

     \\[3pt] &&&{\ell_{t+1} = \mathbf 2}
     &\tmpA{At time $t+1$, new file of length $2$ is added, algorithm}
     \\ &&&&\tmpB{merges just the new file, pays 2.}

     \\[-16pt] &\text{stack after time } t+1:&\SLOTS{80}{50}{17}{\bf 2}
   \end{array}
 \end{equation*}
 \vspace*{-0.1in}
 \caption{\sf Steps $t$ and $t+1$ of a \bmcf schedule.}\label{fig:example}
 \vspace*{-0.1in}
\end{figure}

Current practice at Google is to constrain the number of levels to a parameter $K$,
otherwise ignoring read costs.  We use \latencyBmc to denote this special case of \bmcf,
which is obtained by taking $f(k) = 0$ if $k\le K$ and $f(k)=\infty$ otherwise.
The parameter $K$ is tuned manually on a per-table basis, based on historical workload. 
This is reliable, but slow, costly, and inflexible.
To explore compaction policies that instead adjust stack size \emph{automatically},
we also consider \linearBmc, which is \bmcf with $f(k)=k$.

\smallskip

For more intuition about the combinatorial structure of \bmcf,
note that the restriction of \latencyBmc to \emph{uniform} instances
(those with $(\ell_t,r_t)=(\lbar,\rbar)$ for all $t$)
is essentially the \emph{egg-dropping puzzle} 
with $n$ floors and $K$ eggs~\cite[Thm.~2]{sniedovich2003or}
(\cite{bentley_general_1982} gives other applications).
The restriction of \linearBmc to uniform instances
is equivalent to \emph{lopsided alphabetic binary coding}~\cite{choy_construction_1983,golin_more_2008,kapoor_optimum_1989}.
We encourage the reader to try solving a uniform instance of \latencyBmc
with $n$ unit lengths and, say, $K=1$ and then $K=2$.
Uniform instances are already combinatorially non-trivial;
the general cases with non-uniform inputs are significantly more complicated.

\smallskip 

Throughout, $X\sim Y$ means $X=(1\pm o(1))Y$,
where $o(1)$ denotes a quantity that tends to zero 
as $n = |\calI|$ tends to infinity.
\emph{With high probability} means with probability $1-o(1)$,
and $[i,j]$ denotes $\{i,i+1,\ldots,j\}$.
$\calI[i,j]$ denotes $(\ell_i,r_i),(\ell_{i+1},r_{i+1}),\ldots,(\ell_j,r_j)$.
A compaction algorithm $A$ is \emph{online} if its choice at time $t$ 
depends only on $\calI[1,t]$.
$A$ is \emph{$c$-competitive} if $A(\calI)\le c\, \opt(\calI)$
for every instance $\calI$.
Given a random instance $\calI$, $A$
is \emph{$c$-competitive in expectation} if $\E_{\calI}[A(\calI)] \le c\,\E_\calI[\opt(\calI)]$,
and \emph{asymptotically 1-competitive in expectation}
if  $\E_{\calI}[A(\calI)] \sim \E_\calI[\opt(\calI)]$.

\paragraph{Summary of main theorems}

\begin{description}\itemsep0em\vspace*{-5pt}

\item[\Cref{thm:latency worst case} (worst-case analysis of {\footnotesize BMC$_{\le \mathbf K}$)}.]
  \emph{There is an online algorithm (called \brb) for \latencyBmc that is $K$-competitive.
    No deterministic online algorithm is less than $K$-competitive.}


\item[\Cref{thm:bijection} (bijection with binary search trees).]
  \emph{For any instance $\calI$ of \bmcf, 
    the schedules $\sigma$ for $\calI$ are isomorphic 
    to the $n$-node binary search trees $T$, under a natural cost function\ldots}

\item[\Cref{thm:linear worst case} (worst-case analysis of {\footnotesize LINEAR BMC}).]
  \emph{There is an online algorithm for \linearBmc that is
    $O(1)$-competitive on ``read-heavy'' instances $\calI$ ---
    those s.t.~$\ell_t = O(r_t)$ for all $t$.}

\item[\Cref{thm:average case} (average-case analyses).]
  \emph{\latencyBmc and \linearBmc have online algorithms $A$ and $B$, respectively,
    that are asymptotically 1-competitive in expectation
    on random inputs $\calI$ with bounded, i.i.d.~requests.
    %
    On such an \calI, 
    letting $(\lbar,\rbar) =(\E_\calI[\ell_t],\E_\calI[\ell_t])$ (for all $t$),
    for \latencyBmc, 
    \[\E_{\calI}[A(\calI)]\,\sim\, \E_{\calI}[\opt(\calI)]\, \sim\, \lbar\, K n^{1+1/K}/c_K\]
    where $c_K = (K+1)/(K!)^{1/K}$ (so $c_K\rightarrow e$ for large $K$). 
    For \linearBmc, 
    \[\E_{\calI}[B(\calI)] \,\sim\, \E_{\calI}[\opt(\calI)] \,\sim\, \beta_\calI\, n \log_2 n,~~~~\,\]
    for
    $\beta_\calI=\beta$ 
    such that
    $1/2^{\beta/\lbar} + 1/2^{\beta/\,\rbar} = 1$,
    so
    \(\beta
    = \Theta(\lbar + \rbar)
    /
    \ln\, (1\!+\!\max(\lbar/\rbar,\, \rbar/\lbar))
    \). 
  }
\end{description}

\paragraph{Benchmarks.}
In many applications at Google, the lengths of inserted files (the $\ell_t$'s) 
follow $\log$-normal distributions.
\Cref{sec:benchmarks} presents empirical benchmarks on such distributions.
The algorithm from \Cref{thm:latency worst case},
\brb\ --- balanced rent-or-buy, performs nearly optimally,
better (sometimes substantially)
than the current default BigTable compaction algorithm (for \latencyBmc).

\paragraph{Techniques.}
\Brb, our $K$-competitive algorithm for \latencyBmc,
is a recursive rent-or-buy scheme that
roughly balances the cost incurred in each of the $K$ stack positions.
\Brb happens to be asymptotically \emph{optimal} on uniform instances.
The proof of $K$-competitiveness is by induction on $K$.
The proof that no algorithm is better than $K$-competitive
uses a non-trivial recursive generalization of the standard rent-or-buy adversary argument.

Offline \bmcf has straightforward dynamic-programming algorithms ---
$O(n^4)$ time for \bmcf, $O(K n^3)$ for $\latencyBmc$, $O(n^3)$ for \linearBmc (\Cref{cor:alg}).
\Cref{thm:bijection} (the bijection with binary trees) is the critical observation 
that unlocks \linearBmc for further analysis.
The theorem yields a tree-based lower bound on \opt (\Cref{thm:lower bound})
analogous to entropy-based lower bounds for alphabetic codes~\cite{golin_more_2008}. 
The lower bound in turn is used to give a linear-time 2-approximation algorithm for \linearBmc 
(\Cref{cor:2-approx}),
and to bound \opt in the proof of \Cref{thm:linear worst case}.

\Cref{thm:bijection} is also used in the proof of \Cref{thm:average case}:
firstly, to bound optimal solutions for \emph{uniform} instances $\overline\calI$
(which correspond exactly to optimal binary search trees and alphabetic codes,
whose costs are well understood);
secondly, to show that, with high probability,
random instances $\calI$ and uniform instances have the same asymptotic cost.

\paragraph{Remarks.}
One aspect of compaction not modeled by \bmcf as defined here
is that key/value pairs may leave the database, due to expiration, deletion, or redundancy.
When a compaction merges several files into one file $F$,
the length of $F$ may be \emph{less than} the length of the merged files.
We note without proof that the $K$-competitive algorithm \brb for \latencyBmc 
(and its proof) extend naturally to show $K$-competitiveness in this more general setting.

It is natural to extend \bmcf to allow so-called  \emph{interior merges}, 
which merge contiguous levels \emph{within} the stack.  
\Opt never uses interior merges,
nor does \brb
(which remains optimally $K$-competitive for \latencyBmc even if interior merges are allowed).
But we conjecture that any $O(1)$-competitive online algorithm for general \linearBmc
will require interior merges.

We're conducting further benchmarks using AsterixDB,
after which we'll benchmark on Google BigTable servers.
Many theoretical problems remain open.
Is \brb asymptotically 1-competitive in expectation on bounded i.i.d.~inputs?
Is there an $o(K)$-competitive \emph{randomized} online algorithm for $\latencyBmc$?
Is there an $O(1)$-competitive online algorithm for general \linearBmc?






\section{Worst-case competitive analysis of BMC$_{\le K}$}
\label{sec:latency worst case}
\setNealCounters

\paragraph{Definition of algorithm \brbk for  \latencyBmc on input \calI.}
For $K=1$, there is only one possible schedule: at each time $t$, all files are merged into one.
For $K>1$, \brbk partitions the times $[1,n]$ into intervals called \emph{phases}.
The first phase $[1,1]$ starts and ends at time $1$.
Each subsequent phase $[s,s']$ ends with \brbk merging all files into one file at time $s'$.
To handle the requests in $[s,s'-1]$ (before the end of the phase),
\brbk runs $\brb_{K-1}$ recursively,
ignoring the single file at the bottom of the stack from the previous phase.
The phase is as long as possible, subject to the constraint that
the cost that $\brb_{K-1}$ incurs during the phase, $\brb_{K-1}(\calI[s,s'])$,
is less than $K-1$ times the cost of the single merge that \brbk does to end the phase, $\ell[1,s']$.
(See (a) in the proof below.)

\begin{theorem}[worst-case analysis for \latencyBmc]\label{thm:latency worst case} 
  (i) \Brbk is $K$-competitive for \latencyBmc. 
  \\[-16pt] 
  \begin{enumerate}\itemsep0in
  \item[(ii)] No deterministic online algorithm for \latencyBmc is less than $K$-competitive.
  \end{enumerate}
\end{theorem}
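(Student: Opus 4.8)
I would prove $\brbk(\calI)\le K\,\opt(\calI)$ by induction on $K$. For $K=1$ the stack has capacity one, so there is a unique schedule and $\brb_1=\opt$. For $K>1$, break $\brbk$'s run on $\calI$ into its phases $[s_1,s'_1],[s_2,s'_2],\dots$ (with $s_1=s'_1=1$ and $s_{i+1}=s'_i+1$). A completed phase $i$ ends with $\brbk$ merging the entire stack; since the stack then holds exactly the total length inserted through time $s'_i$, that final merge costs $b_i:=\ell[1,s'_i]$. During the rest of phase $i$, $\brbk$'s cost equals that of the recursive call $\brb_{K-1}(\calI[s_i,s'_i-1])$, which, the call being online on a prefix of $\calI[s_i,s'_i]$ and each step contributing nonnegative cost, is at most $\brb_{K-1}(\calI[s_i,s'_i])$; and phases are defined precisely so that $\brb_{K-1}(\calI[s_i,s'_i])<(K-1)\,b_i$. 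Hence $\brbk$ spends less than $K b_i$ in phase $i$, so $\brbk(\calI)<K\sum_i b_i$ (with the possibly-incomplete last phase absorbed using the trivial bound $\opt(\calI)\ge\ell[1,n]$).

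\textbf{Part (i), the lower bound $\opt(\calI)\ge\sum_i b_i$ --- the crux.} It remains to show every capacity-$K$ schedule $\sigma$ for $\calI$ costs at least $\sum_i b_i$, which I would do by charging $b_i$ units of $\sigma$'s merge cost to phase $i$, disjointly over $i$. The rent-or-buy dichotomy to exploit: within phase $i$, either $\sigma$ performs a merge that reaches the bottom of its stack (a ``buy''), which costs at least the total length then present and hence at least $b_{i-1}$; or $\sigma$ never touches its bottom file during the phase (it ``rents''), in which case $\sigma$ restricted to its top $\le K-1$ positions is a capacity-$(K-1)$ schedule for $\calI[s_i,s'_i]$, so by the inductive hypothesis its cost there is at least $\brb_{K-1}(\calI[s_i,s'_i])/(K-1)$, which by the (boundary-tight) phase-defining inequality is essentially $b_i$. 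Either way $\sigma$ spends $\Omega(b_i)$ on account of phase $i$. The real work --- and the main obstacle in proving (i) --- is sharpening ``$\Omega(b_i)$'' to ``$\ge b_i$'', reconciling the part $b_{i-1}$ of $b_i$ contributed by files older than phase $i$ with $\sigma$'s actual merges, and verifying the charges across the $m$ phases do not overlap.

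\textbf{Part (ii).} I would construct, by induction on $K$, an adaptive adversary forcing any deterministic online algorithm $A$ for \latencyBmc to incur cost $\ge(K-o(1))\,\opt$. For $K=1$ there is nothing to prove (the unique schedule is optimal); the recursion does real work starting at $K=2$, where it is a rent-or-buy (ski-rental-flavored) argument, noting that the per-step ``rent'' cost is not fixed but is steered by the adversary's choice of lengths. In general the adversary keeps a ``sealed'' bottom file of current length $B$ and, above it, runs the level-$(K{-}1)$ adversary against the portion of $A$'s schedule confined to the top $\le K-1$ positions --- a legal level-$(K{-}1)$ play exactly while $A$ avoids merging into the bottom file. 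A level-$K$ merge (a ``buy'') costs $A$ about $B$ and reseals a larger bottom; between buys the inductive hypothesis forces $A$ to pay about $K-1$ times a reference cost on the ``rented'' top. The adversary runs its own reference schedule, buying whenever accumulated rent reaches the current buy cost, and argues: if $A$ buys too rarely it overpays in the level-$(K{-}1)$ subgame; if too often it overpays on the $B$'s; so in every case $A$'s total is at least $K$ times the reference cost, which lower-bounds $\opt$. I expect the hard part to be forcing the level-$(K{-}1)$ overpayment and the level-$K$ buy overpayment to combine into exactly one extra unit of ratio per level (yielding $K$, not a weaker ratio), and making the asymptotic level-$(K{-}1)$ guarantee mesh with the finite, adversary-chosen lengths of the rent segments --- which likely forces $B$ to grow geometrically across segments and requires $n\to\infty$.
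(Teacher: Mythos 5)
Your high-level strategy matches the paper's for both parts (induction on $K$ using the phase decomposition of \brbk for the upper bound; a recursive rent-or-buy adversary for the lower bound), but in each part you leave a gap precisely at the step you flag as ``the real work,'' and the paper closes those gaps with specific ideas your sketch does not supply.

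For Part (i), the spot where you write that a ``buy'' (a full-stack merge by $\sigma$ in phase $i$) ``costs at least the total length then present and hence at least $b_{i-1}$'' is the gap: you need $\ge b_i = \ell[1,s'_i]$, not $b_{i-1}$. The paper gets $b_i$ by not looking only at the one big merge. If $\sigma$ merges to the bottom at some time $t\in[s_i,s'_i]$, that merge alone costs $\ell[1,t]$; but additionally every time $t'\in[t+1,s'_i]$ a new file $\ell_{t'}$ is inserted and must be merged at least once (into the stack), contributing at least $\ell_{t'}$. Summing gives $\ge \ell[1,t] + \ell[t+1,s'_i] = \ell[1,s'_i] = b_i$. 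The disjointness worry you raise is also moot in the paper's formulation: instead of charging portions of $\sigma$'s global cost to phases, it simply compares $\brbk$'s cost \emph{incurred during phase $i$} with $\opt$'s cost \emph{incurred during phase $i$}; since the phases partition time and costs are incurred per time step, the per-phase comparisons sum automatically with no double-counting. Also note that the phase-defining inequality already gives $(K-1)\ell[1,s'_i]\le\brb_{K-1}(\calI[s_i,s'_i])$, so in the ``rent'' case the inductive bound $\brb_{K-1}(\calI[s_i,s'_i])/(K-1)$ is again $\ge \ell[1,s'_i]=b_i$; both cases give the same clean $b_i$, and the ratio $K$ falls out immediately.

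For Part (ii), your recursive adversary is the right shape, but two ideas are missing. First, the paper does not let the adversary pick file lengths adaptively; it fixes in advance a hierarchy of well-separated lengths $w_{h1}\ll w_{h2}\ll\cdots$, separated by a huge factor $L_K$ both within and across levels, so that the sum of any merged collection is within a $(1-1/L_K)^{-1}$ factor of its maximum (\emph{max-based cost}). The adversary's only adaptivity is deciding when each phase ends (either when $A$ merges the phase's anchor length into a larger one, or after a timeout). Second, and more importantly, your ``single reference schedule'' that buys when rent accumulates will not yield the exact factor $K$; the paper instead constructs $K$ different offline schedules $\beta(1),\ldots,\beta(K)$, where $\beta(b)$ assigns each $h$-length to slot $h$ or $h+1$ depending on whether $h<b$, so that slot $b$ ``doubles up'' on two adjacent length types. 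Against any $A$-induced instance, the \emph{average} max-cost of these $K$ schedules is $\approx \frac{1}{K}\sum_{h,i} w_{hi}n_{hi}$, while $A$ itself is forced to pay $\approx \sum_{h,i} w_{hi}n_{hi}$ (up to the small ``timeout loss'' controlled by well-separation). The averaging over $K$ shifted schedules is the mechanism that produces one extra unit of ratio per level, which is exactly the hard part you anticipated but did not resolve.
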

The proof consists of the two lemmas below.

\begin{sublemma}[Part (i)]\label{lemma:latency worst case upper bound}
  There exists a $K$-competitive online algorithm for \latencyBmc. 
\end{sublemma}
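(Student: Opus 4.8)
The plan is to prove that the algorithm $\brbk$ defined just before the theorem is $K$-competitive, by induction on $K$. The base case $K=1$ is trivial: there is only one legal schedule (merge everything at every step), so $\brb_1$ is exactly $\opt$ and hence $1$-competitive. For the inductive step, assume $\brb_{K-1}$ is $(K-1)$-competitive for $\latencyBmc$ with bound $K-1$ on the stack size, and consider $\brbk$ running on an instance $\calI$ of length $n$. Partition $[1,n]$ into the phases $[s,s']$ produced by $\brbk$; note each phase ends with a single global merge, paying $\ell[1,s']$ (the sum of all lengths inserted so far in this "block" of the stack — here I need to be slightly careful: after the first phase the bottom file persists, so "$\ell[1,s']$" really means the total current length of everything being merged at time $s'$). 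Within a phase $[s,s']$, the remaining cost is exactly $\brb_{K-1}(\calI[s,s'-1])$ (the recursive run, which never touches the frozen bottom file), plus possibly the trailing merge cost. The key accounting step is: by the phase-termination rule, $\brb_{K-1}(\calI[s,s'-1]) < (K-1)\cdot\ell[1,s']$, so the total cost $\brbk$ pays in phase $[s,s']$ is at most $(K-1)\ell[1,s'] + \ell[1,s'] = K\,\ell[1,s']$, i.e.\ at most $K$ times the cost of the phase-ending merge.

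Next I would lower-bound $\opt(\calI)$ in terms of the phase-ending merges. The natural claim is that $\opt$ must, within each phase (or within each phase except possibly the last), incur cost at least $\ell[1,s']$ — intuitively because by the time phase $[s,s']$ is forced to end, the "small" structure that $\brb_{K-1}$ was maintaining has grown so expensive that any schedule restricted to $K$ levels must have performed a merge of total length comparable to $\ell[1,s']$. Making this precise is the crux. One clean way: define, for the sub-instance seen during $[s,s'-1]$, the quantity $\opt_{K}$ restricted to that window; show by the inductive hypothesis (contrapositive / tightness of $\brb_{K-1}$) and the maximality of the phase that $\opt$ on the whole instance must pay at least $\ell[1,s']$ attributable to phase $[s,s']$; then sum over phases. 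If phase $[s,s']$ is maximal, then extending $\brb_{K-1}$ by one more step would violate the constraint, meaning $\brb_{K-1}(\calI[s,s'+1]) \ge (K-1)\,\ell[1,s'+1]$; combined with $(K-1)$-competitiveness of $\brb_{K-1}$ this forces $\opt_{K-1}(\calI[s,s'+1]) \ge \ell[1,s'+1]$, and a short argument transfers this to a lower bound on $\opt_K(\calI)$ of order $\ell[1,s']$ per phase. Summing, $\opt(\calI) \gtrsim \sum_{\text{phases}} \ell[1,s']$, and comparing with the upper bound $\brbk(\calI) \le K\sum_{\text{phases}}\ell[1,s']$ gives $\brbk(\calI) \le K\,\opt(\calI)$, up to handling the last (possibly non-maximal) phase as a lower-order or absorbable term.

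The main obstacle I anticipate is exactly that last step: cleanly charging $\opt$'s cost to phases. The subtlety is that $\opt$'s merges need not align with $\brbk$'s phase boundaries, and the bottom file frozen across phases interacts with what $\opt$ is doing at lower stack levels. I would handle this by a level-based potential / charging argument: associate phase $[s,s']$ with the event that $\opt$ either performs a merge overlapping $[s,s']$ whose length is $\ge \ell[1,s']$, or leaves so many files live at time $s'$ that it is about to exceed its own budget of $K$ levels at the next forced insertion — and show one of these must happen by maximality. A secondary, more routine obstacle is bookkeeping the "$\ell[1,s']$ vs.\ current total length" distinction across phases and confirming the recursion's stack-size budget is respected (the recursive $\brb_{K-1}$ uses $\le K-1$ levels above the frozen file, for a total of $\le K$). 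I expect these to be manageable but notation-heavy; the conceptual content is entirely in the induction plus the maximality-of-phase lower bound on $\opt$.
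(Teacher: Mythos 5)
Your overall structure — induction on $K$, phase decomposition, and the per‑phase upper bound $\brbk$'s cost $< K\,\ell[1,s']$ via the termination rule (a) — matches the paper. The upper bound half of your argument is essentially correct. But you explicitly flag the lower bound on $\opt$'s per‑phase cost as ``the crux'' and ``the main obstacle,'' and your sketch there (a level‑based potential/charging argument tied to ``$\opt$ about to exceed its own budget'') is not a proof and does not match the actual mechanism. That is a genuine gap, and it is exactly where the paper's argument has one clean idea you are missing.

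The missing idea is a dichotomy on what $\opt$ does during phase $[s,s']$, not a charging or potential argument. Either (i) $\opt$ merges \emph{all} files currently in the stack into one at some time $t\in[s,s']$, in which case that single merge costs $\ell[1,t]$ and the subsequent steps $t'\in[t+1,s']$ each cost at least $\ell_{t'}$, so $\opt$ pays at least $\ell[1,s']$ during the phase; or (ii) $\opt$ never merges all the way down during $[s,s']$, in which case whatever file $\opt$ had at the bottom of its stack at time $s$ stays untouched for the whole phase, so $\opt$ is effectively serving $\calI[s,s']$ with only $K-1$ stack slots. In case (ii) the inductive hypothesis — $\brb_{K-1}$ is $(K-1)$‑competitive — gives $\opt$'s cost during the phase $\ge \brb_{K-1}(\calI[s,s'])/(K-1)$, which by the termination inequality $(K-1)\ell[1,s']\le\brb_{K-1}(\calI[s,s'])$ is again at least $\ell[1,s']$. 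So in both cases $\opt$ pays $\ge \ell[1,s']$ in the phase, and comparing to your upper bound $< K\ell[1,s']$ finishes the phase. Note your reading of the termination condition is also slightly off: the phase ends at $s'$ because $(K-1)\ell[1,s']\le\brb_{K-1}(\calI[s,s'])$ already, not because extending to $s'+1$ would violate it; the paper's chain (a) is $\brb_{K-1}(\calI[s,s'-1]) < (K-1)\ell[1,s'] \le \brb_{K-1}(\calI[s,s'])$. Finally, the last phase may end without a global merge; there the same lower bound on $\opt$ still holds by the identical dichotomy, while $\brbk$'s cost in that phase is only $\brb_{K-1}(\calI[s,s']) < (K-1)\ell[1,s']$, so the ratio there is actually $\le K-1$. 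Your worry about $\opt$'s merges not aligning with phase boundaries is a red herring once you observe that ``$\opt$ merges everything into one'' versus ``$\opt$ keeps its bottom file frozen'' partitions all possibilities within each phase.
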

\begin{proof}
  Fix an input $\calI$.  
  Let $\calI[i,j]$ denote the subsequence $(\ell_i,r_i),\ldots,(\ell_j,r_j)$ of $\calI$.
  Let $\ell[i,j] = \sum_{h=i}^j \ell_h$. 
  For $K=1$, all algorithms are the same, hence 1-competitive.
  To complete the proof, for $K>1$, we show that, for each phase $[s,s']$, 
  during the phase, the cost incurred by $\brb_K$ 
  is at most $K$ times the cost incurred by \opt.
  First consider any phase that ends with $\brb_K$ merging all files into one
  (as happens in every phase except maybe the last).
  During the phase:
  \\[8pt]
  \begin{tabular}{r@{ }l@{~~}l}
    (a) &  $\Brb_K$ chooses $s'$ so 
    & \(
      \brb_{K-1}(\calI[s,s'-1]) \,<\, (K-1)\,\ell[1,s'] \,\le\, \brb_{K-1}(\calI[s,s']).
      \) 
    \\[5pt]
    (b) &  $\brb_K$ incurs cost
    & \( \brb_{K-1}(\calI[s,s'-1]) \,+\, \ell[1,s'] \).
    \\[5pt]
    (c) & \Opt incurs cost at least 
    & \(  \min\big\{\textstyle\frac{1}{K-1}\brb_{K-1}(\calI[s,s']), ~ \ell[1,s']\big\} \).
      {}~~~~(This is proven below.)
  \end{tabular}
  \\[8pt]
  Bounds (a-c) above imply, by algebra, that \brbk's cost during the phase 
  is at most $K$ times \opt's cost during the phase.
  The proof of (c) has two cases:
  \begin{description}\itemsep0in
  \item[\rm\emph{\Opt merges all files into one at some time $t\in [s,s']$.}]
    For that merge \opt pays $\ell[1,t]$.
    At each time $t'\in[t+1,s']$ \opt pays at least $\ell_{t'}$.
    \Opt's total cost during the phase is at least $\ell[1,s']$.

  \item[\rm\emph{\Opt never merges all files into one during $[s,s']$.}]
    Whatever file \opt had at the bottom of the stack at time $s$ remains
    untouched throughout the phase.
    Hence, \opt handles $\calI[s,s']$ using only $K-1$ stack slots.
    By induction, $\brb_{K-1}$ is $(K-1)$-competitive on $\calI[s,s']$, 
    so \opt's cost to do so is at least $\brb_{K-1}(\calI[s,s'])/(K-1)$.
  \end{description}
  Finally, consider any phase that ends without $\brb_K$ merging all files into one
  (this must be the final phase).
  Bound (c) above holds by the same argument. 
  $\brb_K$'s cost in the phase is $\brb_{K-1}(\calI[s,s'])$
  which, by definition of $\brb_K$, since $\brb_K$ doesn't merge, is less than $(K-1)\,\ell[1,s']$.
  This and (c) imply that $\brb_K$'s cost during the phase is most $K-1$ times \opt's cost.
\end{proof}

\begin{sublemma}[Part (ii)]\label{lemma:latency worst case lower bound}
  No deterministic online algorithm for \latencyBmc is less than $K$-competitive.
\end{sublemma}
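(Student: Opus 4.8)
I would induct on $K$. For $K=1$ there is nothing to prove: with a single stack slot the schedule is forced (merge all files at every step), so every online algorithm costs exactly \opt. For $K\ge 2$ take as inductive hypothesis an effective form of the lemma for $K-1$: an adaptive adversary $\mathcal R_{K-1}$ that, against any deterministic online algorithm confined to $K-1$ stack slots (equivalently, that never merges the file at the bottom of its stack), can --- by running long enough --- force that algorithm's cost above $(K-1-\eps)$ times \opt for every $\eps>0$, and whose instances can be realized at any prescribed scale (arbitrarily large total length, hence arbitrarily large optimal cost).

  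The adversary for $K$ slots would be a recursive ``rent-or-buy'' construction mirroring \brbk: the bottom stack slot is the ``buy'' option (a full merge of the whole stack, whose cost equals the total length seen so far), while the top $K-1$ slots are fed a recursively generated ``rent'' stream from $\mathcal R_{K-1}$. Play in super-phases, with geometrically growing scales $B_1\ll B_2\ll\cdots$. Super-phase $i$ begins with the online algorithm holding a single file of length $B_i$ (just after a full merge). Inside it, run $\mathcal R_{K-1}$ against the algorithm's behavior on its top $K-1$ slots, scaling $\mathcal R_{K-1}$'s inputs so that the optimal cost of this sub-instance reaches $\approx B_i$ just when the algorithm ``should'' do its next full merge --- concretely, continue until the algorithm's accumulated cost on its top $K-1$ slots crosses $(K-1)B_i$. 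By the inductive hypothesis the algorithm is then forced to have spent $\gtrsim(K-1-\eps)B_i$ on the sub-instance, and the single full merge that closes the super-phase contributes another $\approx B_i$, i.e.\ $\approx KB_i$ per super-phase; since the scales grow geometrically the last super-phase dominates, so the algorithm's total is $\approx K$ times that of the schedule I would exhibit for \opt, which in each super-phase either handles the sub-instance with $K-1$ slots at cost $\approx B_i$ or performs the single merge at cost $\approx B_i$ --- these balanced by the choice of threshold.

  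Two steps carry the real work. First, the algorithm may refuse to ever do a full merge --- ``over-renting'' with only $K-1$ effective slots forever; the threshold and the geometric gaps must be set so this is strictly self-defeating (continuing the sub-game indefinitely without resetting costs more than $K$ times \opt, since the never-resetting algorithm forfeits the $K$-th slot that \opt may use). This is the recursive generalization of the standard ``if the skier never buys, extend the season'' step, and making the thresholds compose through all $K$ levels is what makes the argument delicate. Second --- and I expect this to be the main obstacle --- is the \opt side: one must show no schedule beats $\Theta(B_i)$ in super-phase $i$, i.e.\ rule out \opt ``collapsing the gap'' by reorganizing across super-phases (e.g.\ merging its bottom file in early to gain the $K$-th slot, or carrying top-slot state from one super-phase into the next to avoid paying afresh). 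This forces the scales and the internal structure of the $\mathcal R_{K-1}$ sub-instances to be chosen so that an optimal schedule really is pinned to handling each super-phase essentially independently at cost $\Theta(B_i)$, while the online algorithm is driven into the extra wasted merge. Granting that, letting the number of super-phases and the final scale tend to infinity drives the competitive ratio to $K$; with the $K$-competitiveness of \brbk this shows the worst-case ratio for \latencyBmc is exactly $K$.
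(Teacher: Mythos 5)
You have the right high-level picture --- the paper does describe its adversary as a ``recursive generalization of the standard rent-or-buy argument,'' with $K$ nested levels of scale and phases that end either when the algorithm consolidates or after a timeout. But two things you flag as ``the real work'' are exactly where your sketch, as stated, does not close, and the paper resolves them by means that are not in your plan.

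The more serious one is the \opt bound. Your plan bounds \opt per super-phase by the rent-or-buy dichotomy $\min(\text{cost of a full merge}, \text{\opt on the sub-instance with $K-1$ slots})$, and then needs $\opt_{K-1}$ to be $\gtrsim B_i$. But your inductive hypothesis only says the online algorithm pays $\ge (K-1-\eps)\cdot\opt_{K-1}$ on the sub-instance; together with your stopping rule (``stop when the algorithm has paid $(K-1)B_i$'') this gives an \emph{upper} bound $\opt_{K-1}\lesssim B_i$, not the lower bound you need, and it gives no control at all on what a \emph{$K$-slot} schedule can do on that sub-instance (which \opt may well use, e.g.\ by merging in the bottom file early). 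The paper does not use the dichotomy at all. Instead it writes down, for the explicit $K$-level instance, $K$ fixed schedules $\beta(1),\dots,\beta(K)$ --- $\beta(b)$ collapses the $(b-1)$-lengths and $b$-lengths into one slot, giving every other level its own slot --- and shows that the \emph{average} of their max-based costs is $\approx \frac1K\sum_{h,i} n_{hi}w_{hi}$ (Lemma~1.3), while every online algorithm is charged $\approx\sum_{h,i} n_{hi}w_{hi}$ by a per-length ``contribution'' accounting (Lemma~1.4). The factor $K$ falls out of this $K$-way averaging, not out of a two-way rent/buy balance; a two-way balance set up as you describe would lose at least a constant factor. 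Making the argument rigorous also requires controlling additive cross-terms between scales, which the paper does via the ``well-separated'' lengths and the max-based-cost surrogate; those technical devices would also be needed to discharge your step that ``the last super-phase dominates.''

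The second issue is over-renting. Your adversary cannot force the terminal merge, and you propose to rule out never merging by a self-defeat argument; but if the algorithm plays forever with $K-1$ effective slots against $\mathcal R_{K-1}$, the inductive hypothesis only yields ratio $\to K-1$, and upgrading it to $K$ requires precisely the claim you have not proved --- that the $K$-th slot buys \opt a $\frac{K-1}{K}$ factor on $\mathcal R_{K-1}$'s instance. The paper sidesteps this by never trying to force a merge: each $h$-phase has a hard timeout after $L_h$ sub-phases, and Lemma~1.4 charges the algorithm through contributions plus a small bounded ``timeout loss'' (Equation~(2)), so the algorithm pays whether it merges or not. In short, your outline identifies the right obstacles but leaves them open; the missing ideas are the $K$-schedule averaging for the \opt upper bound, the well-separated/max-cost machinery that makes the scales compose, and the timeout-plus-contribution accounting that replaces ``forcing'' the merge.
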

\begin{proof}
  \renewcommand{\L}[1]{L_{#1}}

  Fix any deterministic online algorithm $A$.
  We will define a \latencyBmc instance $\calI$ 
  such that $A(\calI)/\opt(\calI)$ is at least $(1+O(K/\L K))\,K$ 
  where $L_K \gg K$ is an arbitrarily large integer.
  This will prove Part (ii).
  
  The lengths in $\calI$ will be {\em well-separated},
  enabling us to use a \emph{max-based cost} in the analysis:

  \begin{subdefinition}[well-separated]
    A set of lengths is \emph{well-separated} (w.r.t.~$\L K$)
    if every two non-zero lengths in the set differ by a factor of at least $\L K$.
    Sequence $\calI$ is well-separated if its lengths are.
  \end{subdefinition}

  \begin{subdefinition}[max-based cost]
    Recall that in the definition of \latencyBmc
    merging a collection of files generates a file 
    whose length is the sum of the merged lengths.
    Modify the definition so that, instead, the merged file's length
    (and the cost of the merge) is the \emph{maximum} of the merged files's lengths.
    The \emph{max-based cost} (of a merge, or of a schedule) 
    is the cost using this modified definition.
  \end{subdefinition}

  \begin{sublemma}\label{lemma:max-based}
    For any well-separated sequence $\calI$ and any schedule $\sigma$,
    the true cost $\sigma(\calI)$
    is at most $1/(1-1/\L K) = 1+O(1/\L K)$ times its max-based cost $\sigma'(\calI)$.
  \end{sublemma}
  \begin{proof}
    With the original definition, the length of a file in the stack at any time
    is the sum $\sum_{t=i}^j \ell_t$ of some interval of lengths in the given instance $\calI$.
    With the modified definition, 
    the length of the file is instead $\max_{t=i}^h \ell_t$,
    the maximum length in the interval.
    Since $\calI$ is well separated,
    $\sum_{t=i}^j \ell_t
    \leq \max_{t=i}^j \ell_t (1+1/\L K+1/\L K^2 + \cdots) 
    = \max_{t=i}^j \ell_t /(1-1/\L K)$.
  \end{proof}

  To prove the theorem, 
  we construct a well-separated $\calI$ 
  for which the max-based cost $\opt'(\calI)$
  is at most $1/K+O(1/\L K)$ times the true cost $A(\calI)$ of $A$ on $\calI$.

  \smallskip 

  Before we define the lengths to be used in $\calI$,
  fix $K$ integers $\L1 \gg \L2 \gg \cdots \gg \L K \gg K$,
  by choosing arbitrarily large $\L K \gg K$,
  then defining each $\L h$ for $h\in[1,K-1]$ from $\{\L {h+1},\ldots,\L K\}$ via
  \begin{equation}\label{eqn:nx1}
    \L h \, = \, \L {h+1} \L K^{N_h}
    \mbox{~~where~~} \textstyle N_h = \prod_{i=h+1}^{K} \L i.
  \end{equation}
  For each $h\in[1,K]$, 
  define the \emph{$h$-lengths}: $w_{h1} \ll  w_{h2}\ll  \cdots \ll w_{hN_h}$
  by taking $w_{hi} = \L K^{i}/\L h$.

  \smallskip                    

  \begin{sublemma}\label{lemma:separated}
    (i) The set ${\{w_{hi}\}}_{h,i}$ of lengths defined above is well separated.

    (ii) Each $h$-length $w_{hi}$ is at most $1$, but satisfies $\L h \,w_{hi} \ge \L K$.
  \end{sublemma}
  \begin{proof}
    For any $h\in[1,K]$, 
    the $h$-lengths are well-separated among themselves. 
    The largest $h$-length is $w_{h N_h}$,
    which (by~\eqref{eqn:nx1} and Def.~of $w$) 
    is at most $1/\L K$ times the smallest $(h+1)$-length $w_{h+1,1}$.
    This implies that the $h$-lengths are well-separated from the $(h+1)$-lengths,
    so the complete set is well-separated.
    It also implies that each length $w_{h,i}$ is at most $w_{K\,1}=1$.  
    By inspection, $\L h \, w_{h\,i} \ge \L K$.
  \end{proof}


  Define the request sequence $\calI$ inductively via \emph{phases}.
  A \emph{1-phase} inserts the next unused $1$-length, then repeatedly inserts zeros;
  it stops when the algorithm merges the 1-length with a larger length
  or the 1-phase has inserted $\L 1$ zeros.
  For $h\in[1,K-1]$, an \emph{$h$-phase} 
  inserts the next unused $h$-length, then repeatedly does $(h-1)$-phases;
  it stops when the algorithm merges the $h$-length with a larger length
  or the $h$-phase has done $\L h$ $(h-1)$-phases.
  A \emph{$K$-phase} reveals inserts the $K$-length $w_{k1}=1$,
  then does $\L K\,$ $(K-1)$-phases.
  The sequence $\calI$ is just a single $K$-phase.

  \smallskip

  Observe that $\calI$ uses exactly one $K$-length, 
  exactly $\L K$ $(K-1)$-lengths, 
  at most $\L K\L {K-1}$ $(K-2)$-lengths, 
  and, for $h \in [1,K]$, at most $N_h\,$ $h$-lengths (for $N_h$ from~\eqref{eqn:nx1}).

  \smallskip

  For $h\in[1,K]$, 
  let $n_h$ ($\leq N_h$) denote the total number of $h$-phases in $\calI$.
  (This depends on the algorithm.)
  For $i\in[1,n_h]$, let $n_{hi}$ denote the number of $(h-1)$-phases 
  (or number of zeros if $h=1$) within the $i$th $h$-phase.
  Note $n_K = 1$ and $n_{k1} = \L K$.

  \begin{sublemma}\label{lemma:opt}
    The max-based cost of $\opt$ on $\calI$ is at most
    \(2 + \frac{1}{K} \sum_{h=1}^K \sum_{i=1}^{n_h} w_{h\,i}\, n_{h\,i}\).
  \end{sublemma}
  \begin{proof}
    We show that there exists a schedule of at most the desired max-based cost.

    Recall that we have $K+1$ types of lengths in $\calI$:
    zeros, 1-lengths, 2-lengths, \ldots, $K$-lengths
    (in order of increasing length).
    Call zeros \emph{0-lengths}.

    
    Consider $K$ different $K$-slot schedules
    $\beta(1),\beta(2),\ldots, \beta(K)$,
    where, for each $b \in [1,K]$, schedule $\beta(b)$ chooses slots
    according to the following rule:
    \emph{Given an $h$-length, if $h < b$, 
      then merge it into slot $h+1$, else merge it into slot $h$.}
    That is, slot $b$ receives by $(b-1)$-lengths and $b$-lengths;
    every other length type $h$ goes in its own slot:
    $h$ (if $h<b-1$) or $h+1$ (if $h>b$). 

    What is the max-cost of $\beta(b)$ on $\calI$?
    Consider the $h$-lengths $\ell_t$ with $h\ne b-1$.
    For such a length,
    $\beta(b)$ merges the length only with previously merged $\ell$-lengths
    where $\ell \le h$.
    Because all $\ell$-lengths with $\ell<h$ are smaller than all $h$-lengths,
    and $h$-lengths occur in $\calI$ in increasing order,
    these other lengths are smaller than $\ell_t$, so the max-based merge cost is $\ell_t$.
    Hence, the total cost of such merges
    is at most $\sum_t \ell_t = \sum_{h=1}^K \sum_{i=1}^{n_h} w_{hi}$.
    Further, since the lengths are well separated,
    this sum is at most $w_{11}/(1-1/\L K)  \le 2$.

    Next consider the insertion of any $(b-1)$-length $\ell_t = w_{b-1,\,j}$.
    The max-cost of its merge is the most recently revealed $b$-length, say $w_{bi}$.
    So, the $b$-length from $b$-phase $i$
    contributes its length to the aggregate max-cost
    once for each $(b-1)$-phase that occurs in $b$-phase $i$.

    In sum, the max-cost of $\beta(b)$ is at most
    $2 + \sum_{i=1}^{n_b} w_{bi} n_{bi}$.
    Hence, the max-based-costs of the $K$ schedules ${\{\beta(b)\}}_b$
    are, on average, at most the bound claimed in the lemma.
  \end{proof}

  \begin{sublemma}\label{lemma:A}
    The cost of $A$ on $\calI$ is at least
    \(
    \big(1-1/\L K\big) \sum_{h = 1}^{K} \sum_{i=1}^{n_h} n_{h\, i} \,w_{h\,i}.
    \)
  \end{sublemma}
  \begin{proof}
    When a merge occurs at time $t$,
    the cost $s^{t+1}_{\sigma_t}$ of the merge
    is the sum of some interval $\calI[i,t]$ of lengths in $\calI$;
    say each length in this interval \emph{contributes its value to the merge}.
    The total contributions of all lengths in $\calI$ (to all merges)
    equals the cost of the schedule.

    For $i\in[1,n_1]$, the $i$th $1$-phase reveals $1$-length $w_{1i}$, then $n_{1i}$ zeros.  
    Slot 1 is not emptied before the phase ends,
    so slot 1 contains $w_{1i}$ until the end of the phase,
    so each of the $n_{1i}$ zeros causes $w_{1i}$ to contribute to one merge,
    contributing in total at least $n_{1i} \,w_{1i}$.
    For $h>1$, for $i\in[1,n_h]$, the $i$th $h$-phase reveals $h$-length $w_{hi}$, 
    then does $n_{h\,i}$ $(h-1)$-phases.
    Slot $h$ is not emptied before the $h$-phase ends,
    so $w_{hi}$ is contained in a slot in $[1,h]$ until the end of the $h$-phase.
    Each $(h-1)$-phase $j$ in the $i$th $h$-phase 
    either (a) ends with a merge that empties slot $h-1$,
    which must cause $w_{hi}$ to contribute to that merge,
    or (b) \emph{times out} --- 
    that is, $(h-1)$-phase $j$ does $n_{h-1,j} = \L {h-1}$ iterations.
    Let $\tau_{hi}$ be the number of $(h-1)$-phases in the $h$-phase that time out,
    so that length $w_{hi}$'s contributions total at least $(n_{h\,i}-\tau_{h\,i}) w_{hi}$.
    Summing over the lengths,
    their total contributions sum to at least the desired lower bound,
    \(\sum_{h=1}^K\sum_{i=1}^{n_h} n_{h\,i}w_{hi}\),
    minus the \emph{timeout loss}:
    \(\sum_{h=2}^K\sum_{i=1}^{n_h}  \tau_{h\,i} w_{hi}\). 

    To bound the timeout loss by $1/\L K$ times the desired lower bound,
    we observe, for $h\ge 2$, that
    \begin{equation}\label{eqn:claim}
      \sum_{i=1}^{n_h} \tau_{h \,i}\, w_{h\,i} 
      \,\le\, \frac{1}{\L K}\sum_{j=1}^{n_{h-1}} n_{h-1,j}\, w_{h-1,j},
    \end{equation}
    because, within each $h$-phase $i$,
    each of the $\tau_{hi}$ $(h-1)$-phases that times out 
    contributes one of the $w_{hi}$'s to the left-hand sum,
    while its corresponding contribution to the right-hand sum,
    $n_{h-1,j}\, w_{h-1,j} = \L {h-1} \,w_{h-1,j}$ is, 
    by \Cref{lemma:separated} (ii), at least $\L K\, w_{hi}$.

    Summing~\eqref{eqn:claim} over $h\ge 2$,
    the timeout loss
    is at most $1/\L K$ times the desired lower bound.
  \end{proof}

  Lemmas~\ref{lemma:max-based},~\ref{lemma:opt} and~\ref{lemma:A}
  together with the observation that $w_{k1} n_{k1} = L_K$,
  imply (by algebra) that the cost of $A$
  divided by the cost of $\opt$
  is at least $(1-O(K/\L K))K$. 
\end{proof}


\section{Schedules for {\normalsize BMC}$_f$  as binary search trees}\label{sec:bijection}
\setNealCounters

This section proves \Cref{thm:bijection}:
for any instance $\calI$ of \bmcf,
the schedules are isomorphic to $n$-node binary search trees.
Fix any instance \calI of \bmcf.
Let $n$ be the length of \calI.

\begin{subdefinition}
  A \emph{tree for $\calI$} is any $n$-node binary search tree $T$ holding keys $\{1,2,\ldots,n\}$.

  Define $\latency(T) = \max_{t=1}^n 1+\rightDepth_T(t)$.\,\footnote 
  {The path from the root to the node with key $t$
    has $\leftDepth_T(t)$ left children and $\rightDepth_T(t)$ right children.}

  Define
  $\cost_f(T) =\sum^n_{t=1} \ell_t\, (1+\leftDepth_T(t)) + r_t\, f(1+\rightDepth_T(t))$.
\end{subdefinition}

Recall that, given a schedule $\sigma$, $k_t$ denotes the stack size that $\sigma$ yields at time $t$. 

\begin{theorem}\label{thm:bijection}
  There is a bijection $\phi$ between the schedules for \calI 
  and the trees for \calI.  Further,
  for any schedule $\sigma$ and its tree $T=\phi(\sigma)$,
  for each $t\in[1,n]$, $k_t = 1+\rightDepth_T(t)$,
  and the number of times $\sigma$ merges the file inserted at time $t$ (directly or indirectly)
  is $1+\leftDepth_T(t)$.
  Hence, the bijection preserves latency and cost.  
\end{theorem}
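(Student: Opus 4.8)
The plan is to exhibit the bijection $\phi$ by defining its inverse $\psi$ recursively and then checking every asserted structural identity by induction on $n$; both the construction and the verification are organized around the root of the tree. Fix a tree $T$ for $\calI$ with root $m$, and let $T_L$ and $T_R$ be its left and right subtrees, holding the keys $\{1,\dots,m-1\}$ and $\{m+1,\dots,n\}$ respectively; after shifting indices we regard them as trees for $\calI[1,m-1]$ and $\calI[m+1,n]$, and we set $\sigma_L=\psi(T_L)$ and $\sigma_R=\psi(T_R)$ recursively. Then $\psi(T)$ is the schedule for $\calI$ that: (i) during times $[1,m-1]$ performs $\sigma_L$; (ii) at time $m$, after $\ell_m$ has been pushed, merges the \emph{entire} stack into one block; (iii) during times $[m+1,n]$ performs $\sigma_R$ on the portion of the stack lying strictly above the bottom block $[1,m]$ --- this is well defined since $\sigma_R$'s merges only ever combine blocks created after time $m$ and never reach the bottom block. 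The empty tree maps to the empty schedule. Every step of $\psi(T)$ pushes one file and then merges a suffix, so $\psi(T)$ is a legal schedule.

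Next I would prove $\psi$ is a bijection by induction on $n$, for which it suffices to show that an arbitrary schedule $\sigma$ for $\calI$ (with $n\ge 1$) decomposes uniquely in the form above. Let $[1,m]$ be the bottom block of the stack that $\sigma$ holds at time $n$. The decomposition rests on two elementary facts: (a) any block $[i,j]$ that ever appears on the stack was created by the merge at time $j$ --- afterwards it is either left untouched or absorbed into (and thereby destroyed by) a strictly larger block; and (b) every merge acts on a suffix of the stack. By (a) the block $[1,m]$ is created at time $m$, and by (b) the merge creating it, having to include the bottom block, merges the whole stack; hence the steps of $\sigma$ in $[1,m-1]$ form a schedule for $\calI[1,m-1]$, and the steps in $[m+1,n]$ never touch $[1,m]$ (otherwise $[1,m]$ would not survive to time $n$), so they form a schedule for $\calI[m+1,n]$ run above $[1,m]$. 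This recovers $m$, $\sigma_L$ and $\sigma_R$ uniquely, the recovery map is inverse to the construction, and with the inductive hypothesis applied to $T_L$ and $T_R$ it follows that $\psi$ is a bijection; put $\phi=\psi^{-1}$.

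Finally I would verify the two identities for $T=\phi(\sigma)$ by the same induction, using that the root-to-$t$ path in $T$ is: the root-to-$t$ path of $T_L$ preceded by one left edge (if $t<m$), the root-to-$t$ path of $T_R$ preceded by one right edge (if $t>m$), or empty (if $t=m$). For $t<m$, the stack at time $t$ in $\psi(T)$ equals $\sigma_L$'s stack, so $k_t=1+\rightDepth_{T_L}(t)=1+\rightDepth_T(t)$, while file $t$ is merged the $1+\leftDepth_{T_L}(t)$ times recorded by $\sigma_L$ plus once more when the whole stack is compacted at time $m$, i.e.\ $1+\leftDepth_T(t)$ times. For $t=m$, $k_m=1=1+\rightDepth_T(m)$ and file $m$ is merged only at time $m$. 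For $t>m$, the stack at time $t$ is $\sigma_R$'s stack with the extra bottom block $[1,m]$ underneath, so $k_t=1+(1+\rightDepth_{T_R}(t))=1+\rightDepth_T(t)$, and file $t$ is merged exactly as often as $\sigma_R$ merges it, $1+\leftDepth_{T_R}(t)=1+\leftDepth_T(t)$ times (the bottom block is never touched again). Latency is preserved since $\max_t k_t=\max_t(1+\rightDepth_T(t))=\latency(T)$, and cost since $\sum_t L_t=\sum_t\ell_t\cdot(\text{number of merges of file }t)=\sum_t\ell_t(1+\leftDepth_T(t))$ --- each $L_t$ is the total length of the files in the block merged at time $t$, so summing over $t$ and regrouping by file yields this --- together with $\sum_t r_t f(k_t)=\sum_t r_t f(1+\rightDepth_T(t))$, giving $\sigma(\calI)=\cost_f(T)$.

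I expect the uniqueness in the decomposition lemma of the middle step to be the main obstacle: pinning down that the root corresponds exactly to the right endpoint $m$ of the bottom block, and that the two halves of $\sigma$ are genuinely independent, unconstrained schedules of $\calI[1,m-1]$ and $\calI[m+1,n]$. Everything there reduces to fact (a) --- that a block $[i,j]$ must be born at time $j$ --- combined with merges being suffix operations; once that is in hand, the rest is bookkeeping.
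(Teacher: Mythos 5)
Your proof is correct and follows essentially the same approach as the paper: both rest on the same recursive decomposition of a schedule around the last time the stack has size one (your $m$, the paper's $s$; your facts (a) and (b) verify these coincide), and both then verify the depth identities by the same induction. You build $\phi^{-1}$ from tree to schedule while the paper builds $\phi$ from schedule to tree, but this is only a change in the direction of exposition, not in substance.
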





Before proving \Cref{thm:bijection},
to develop intuition,
we state a natural recurrence relation for $\opt(\calI)$.  
The reader can focus on \linearBmc ($f(k)=k$).
\begin{subdefinition}
  Define $f_d(k) = f(k+d)-f(d)$ if $d\ge 1$ and $f_0 = f$.
  For each $(i,j,d)$,
  let $\calI_d[i,j]$ be the \bmcfd instance with read-cost function $f_d$
  and input sequence $(\ell_i,r_i),(\ell_{i+1},r_{i+1}),\ldots,(\ell_j,r_j)$.

  Let $\opt_d[i,j]$ denote the minimum cost of any schedule to $\calI_d[i,j]$.
  For $i>j$, let $\opt_d[i,j] = 0$.

  Let $\ell[i,j] = \sum_{h=i}^j \ell_h$, and $r[i,j] = \sum_{h=i}^j r_h$. 
\end{subdefinition}

\begin{sublemma}[recurrence relation for \bmcf]\label{lemma:recurrence}
  $\opt(\calI) =\opt_0[1,n]$ and, for $1\le i\le j\le n$ and $d\ge 0$, 
  \begin{equation}\label{eq:recurrence}
    \opt_d[i,j] = \min_{s=i\ldots j} \opt_d[i,s-1] +\ell[i,s] + r[s,j] f_d(1) + \opt_{d+1}[s+1,j].
  \end{equation}
\end{sublemma}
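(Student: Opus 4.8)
The identity $\opt(\calI)=\opt_0[1,n]$ is immediate from the definitions, since $f_0=f$ and $\calI_0[1,n]$ is the original instance. For the recurrence I would prove the two inequalities separately, and both rest on one elementary fact about the shifted read-cost functions: for all integers $m\ge 1$ and $d\ge 0$,
\[
 f_d(m+1)\;=\;f_{d+1}(m)+f_d(1),
\]
which is just $f(d{+}m{+}1)-f(d)=\bigl[f(d{+}1{+}m)-f(d{+}1)\bigr]+\bigl[f(d{+}1)-f(d)\bigr]$ (and it holds at $d=0$ too, with $f_0=f$). Informally this says: ``peeling one untouched file off the bottom of the stack'' turns an $f_d$-subproblem into an $f_{d+1}$-subproblem at the price of an extra $f_d(1)$ per read.

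For ``$\le$'': fix any $s\in[i,j]$ and exhibit a schedule for $\calI_d[i,j]$ whose cost is the $s$-th term on the right. Run an optimal schedule for $\calI_d[i,s-1]$ on times $i,\dots,s-1$; at time $s$ insert $\ell_s$ and merge the whole stack into one file (merge cost $\ell[i,s]$, since merging preserves total length; read cost $r_s f_d(1)$, since the stack now has size $1$); then on times $s+1,\dots,j$ run an optimal schedule for $\calI_{d+1}[s+1,j]$ on top of that single bottom file. If $k'_t$ is the stack size that last sub-schedule reaches on its own, the true stack size at time $t$ is $1+k'_t$, so by the displayed identity its read cost is $r_t f_{d+1}(k'_t)+r_t f_d(1)$; summing over $t\in[s+1,j]$ turns this block's total contribution into $\opt_{d+1}[s+1,j]+f_d(1)\,r[s+1,j]$. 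Adding the three blocks gives exactly $\opt_d[i,s-1]+\ell[i,s]+f_d(1)\,r[s,j]+\opt_{d+1}[s+1,j]$, and minimizing over $s$ yields ``$\le$''.

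For ``$\ge$'' (the crux): take an optimal schedule $\sigma$ for $\calI_d[i,j]$. First record the standard invariant (immediate induction on $t$): after the merge at time $t$, $\sigma$'s stack is a bottom-to-top sequence of files whose index-intervals partition $[i,t]$ in increasing order; hence a merge that touches the bottom file is necessarily a merge of the entire stack, leaving a single file. Let $s$ be the largest time in $[i,j]$ at which $\sigma$'s stack has size $1$ (well defined, since time $i$ leaves one file). By maximality of $s$, after time $s$ the bottom file (which covers $[i,s]$) is never part of a merge, since such a merge would again leave a single-file stack. Therefore: the restriction of $\sigma$ to times $i,\dots,s-1$ is a valid schedule for $\calI_d[i,s-1]$, of cost $\ge\opt_d[i,s-1]$; the merge at time $s$ collapses everything, contributing merge cost $\ell[i,s]$ and read cost $r_s f_d(1)$; and the restriction of $\sigma$ to times $s+1,\dots,j$, ignoring the untouched bottom file, is a valid schedule for $\calI_{d+1}[s+1,j]$ whose true cost (read costs taken against stack sizes $1+k'_t$ with read-cost $f_d$) equals its $\bmc_{f_{d+1}}$-cost plus $f_d(1)\,r[s+1,j]$, again by the identity, hence is $\ge \opt_{d+1}[s+1,j]+f_d(1)\,r[s+1,j]$. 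Summing the three pieces shows $\sigma(\calI_d[i,j])$ is at least the $s$-th term, so at least the minimum over $s$.

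The main obstacle is precisely the structural step in ``$\ge$'': singling out one split time $s$ that decouples an arbitrary optimal schedule into three independent parts; the rest is bookkeeping around the identity. Alternatively, the whole lemma follows immediately from \Cref{thm:bijection}: applied to $\calI_d[i,j]$ it gives $\opt_d[i,j]=\min_T \cost_{f_d}(T)$ over binary search trees $T$ on $\{i,\dots,j\}$, and conditioning on the root $s$ splits $T$ into a left subtree on $\{i,\dots,s-1\}$ and a right subtree on $\{s+1,\dots,j\}$; descending into the left subtree adds $1$ to every left-depth (which, with the root's own $\ell_s$ at depth $0$, yields the $\ell[i,s]$ term and leaves the read-cost function unchanged), while descending into the right subtree adds $1$ to every right-depth (which, via the same identity $f_d(\cdot{+}1)=f_{d+1}(\cdot)+f_d(1)$, replaces $f_d$ by $f_{d+1}$ and contributes the additive $f_d(1)\,r[s,j]$). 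Since the lemma is placed to motivate \Cref{thm:bijection}, I would give the direct argument above and mention the tree interpretation only in passing.
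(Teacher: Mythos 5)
Your proposal is correct and takes essentially the same approach as the paper: both decompose a schedule for $\calI_d[i,j]$ at the last time $s$ with stack size one, splitting it into a prefix sub-schedule, the full merge at $s$, and a suffix sub-schedule over the untouched bottom file, with the converse construction giving the other inequality. Your write-up is simply more explicit than the paper's terse sketch, spelling out the telescoping identity $f_d(m{+}1)=f_{d+1}(m)+f_d(1)$ and the stack-partition invariant that the paper leaves implicit.
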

\begin{proof}
  Consider any schedule $\sigma$ for $\calI_0[1,n]$.
  As shown in \Cref{fig:bijection}(a),
  let $s\in[1,n]$ be the last time that $\sigma$ has stack size 1 ($k_s = 1$).
  The schedule $\sigma$ decomposes into three parts as follows:
  (i) during interval $[1,s-1]$, a schedule for $\calI_0[1,s-1]$;
  (ii) at time $s$, a merge of all files into a single file, say, $F$, at merge cost $\ell[1,s]$;
  (iii) during interval $[s+1,n]$, a schedule for $\calI_1[s+1,n]$,
  during which $F$ remains untouched at the bottom of the stack,
  so that $F$ contributes read cost $r[s,n]\,f(1)$.

  Conversely, any $s\in[1,n]$,
  schedule for $\calI_0[1,s-1]$ and schedule for $\calI_1[s+1,n]$
  yield a schedule for $\calI_0[1,n]$.
  This gives Recurrence~\eqref{eq:recurrence} for $\opt_0[1,n]$.
  The general case is similar.
\end{proof}

\begin{figure}[t]
  \vspace*{-0.4in}
  \centering
  \hfill 
  (a)\includegraphics[height=0.8in]{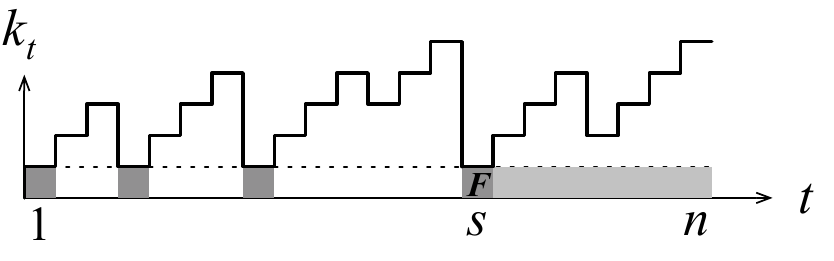}
  \hfill 
  (b)\includegraphics[height=1.2in]{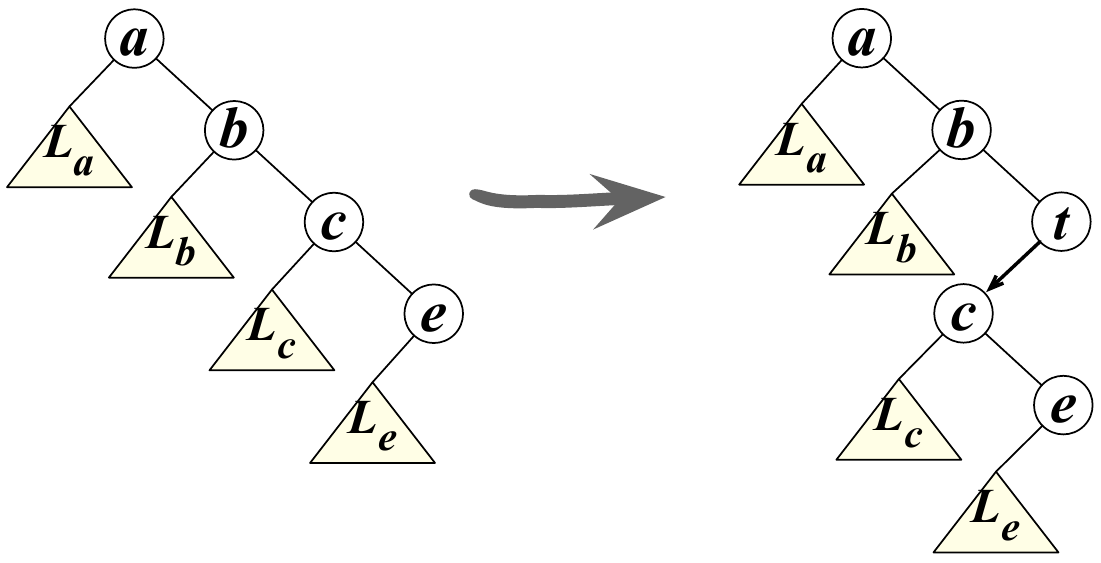}
  \hfill~{}\sf
  \caption{\sf\small
    (a) File $F$ is untouched after the last time $s$ s.t.~$k_s=1$.
    (b) Maintaining $T$ in the online setting.
  }\label{fig:bijection}
\end{figure}


\begin{proof}[Proof of \Cref{thm:bijection}.]
  Fix any schedule $\sigma$ for $\calI$. 
  Construct the corresponding tree $T=\phi(\sigma)$ 
  by following the inductive structure implicit in the proof of \Cref{lemma:recurrence} ---
  take $s$ to be the last time that $\sigma$ makes $k_s=1$ (see \Cref{fig:bijection}(a)),
  make $s$ the key of the root,
  then recurse on intervals $[1,s-1]$ and $[s+1,n]$, respectively, to build $T$'s left and right subtrees.
  An easy inductive argument shows that every node has the desired left and right depth. 
  Given any tree $T$ for $\calI$, the construction can be inverted 
  to construct a corresponding schedule $\sigma$, completing the proof.
\end{proof}

\begin{corollary}\label{cor:alg}
  There is an $O(n^4)$-time dynamic-programming algorithm for offline \bmcf.
  For \latencyBmc and \linearBmc, the time reduces to
  $O(K n^3)$ and $O(n^3)$, respectively.
\end{corollary}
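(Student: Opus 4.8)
The plan is to read \Cref{lemma:recurrence} as a memoized dynamic program over the subproblems $\opt_d[i,j]$, and then just count subproblems. First I would precompute the prefix sums of the $\ell$'s and the $r$'s, so that every $\ell[i,s]$ and $r[s,j]$ appearing in \eqref{eq:recurrence} is available in $O(1)$ time, and precompute the values $f_d(1)=f(1+d)-f(d)$; then evaluating the minimum in \eqref{eq:recurrence} over the $O(n)$ candidate split points $s$ costs $O(n)$ per subproblem. The program is well-founded when the table is filled in order of increasing interval length $j-i$: in \eqref{eq:recurrence} the term $\opt_d[i,s-1]$ uses a strictly shorter interval with the same $d$ (since $s\le j$), and $\opt_{d+1}[s+1,j]$ uses a strictly shorter interval with $d$ increased by one (since $s\ge i$). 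It remains to bound the number of subproblems reachable from $\opt_0[1,n]$: every recursion step that increases $d$ also discards the root $s$ together with the whole left part $[i,s]$, so reachable triples satisfy $d+(j-i+1)\le n$, hence $d\le n$, hence there are $O(n^3)$ of them and the total time is $O(n^4)$. This proves the general claim.

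For \latencyBmc we have $f(k)=0$ for $k\le K$ and $f(k)=\infty$ otherwise. Since $k_t=1+\rightDepth_T(t)$ by \Cref{thm:bijection}, every schedule of finite cost has $\rightDepth_T(t)\le K-1$, so only subproblems with $d\le K-1$ are relevant (any $\opt_d[i,j]$ with $d\ge K$ is $+\infty$ and can be dropped, equivalently any split $s<j$ that would recurse into $\opt_{d+1}[s+1,j]$ with $d+1\ge K$ is infeasible). That leaves $O(Kn^2)$ subproblems at $O(n)$ each, for $O(Kn^3)$ time.

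For \linearBmc, $f(k)=k$, so $f_d(k)=f(k+d)-f(d)=k=f(k)$ for every $d\ge 0$; thus \bmcfd is literally \bmcf for all $d$, and a one-line induction on interval length (base case $\opt_d[i,j]=0$ for $i>j$ is independent of $d$, and the right side of \eqref{eq:recurrence} depends on $d$ only through the tables $\opt_d,\opt_{d+1}$) shows $\opt_d[i,j]$ does not depend on $d$. The program therefore collapses to a single two-dimensional table $\opt[i,j]=\min_{s}\bigl(\opt[i,s-1]+\ell[i,s]+r[s,j]+\opt[s+1,j]\bigr)$ with $O(n^2)$ subproblems at $O(n)$ each, for $O(n^3)$ time. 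There is no real obstacle here; the only points needing (minor) care are the evaluation order that makes the three-index recursion well-founded, the range of $d$ (at most $n$ in general, at most $K$ in the latency case), and the trivial identity $f_d\equiv f$ that removes the $d$-dependence entirely in the linear case.
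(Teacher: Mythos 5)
Your proof is correct and fills in exactly the argument the paper leaves implicit: \Cref{cor:alg} is stated without proof right after \Cref{lemma:recurrence} and \Cref{thm:bijection}, and the intended justification is precisely the memoized dynamic program over the subproblems $\opt_d[i,j]$ that you describe, with prefix sums making each inner minimization $O(n)$. Your three counting steps --- $O(n^3)$ reachable triples $(d,i,j)$ in general (using $d+(j-i+1)\le n$), the restriction to $d\le K-1$ for \latencyBmc, and the identity $f_d\equiv f$ that collapses the $d$-dimension for \linearBmc --- are all correct and match the structure the authors set up.
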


\paragraph{Online \bmcf is equivalent to building a binary search tree online.}
Via \Cref{thm:bijection}, online \bmcf has a natural interpretation as the following online problem.
Given a \bmcf instance $\calI$, as each pair $(\ell_t, r_t)$ is revealed,
the algorithm $A$ must maintain a tree $T$ for $\calI[1,t]$.
At time $t=1$, the tree $T$ is a single node with key 1.
At each time $t>1$, $A$ must insert a new node with key $t$ into $T$,
without changing the relations of nodes already in $T$.
That is, $A$ either appends the new node to the right spine (as the right child of the bottom node), 
or inserts the new node into the right spine above some node $c$,
moving $c$ to the left child of the new node (the new node has no right child),
as shown in \Cref{fig:bijection}(b).
The goal is to minimize $\cost_f(T)$.

By a straightforward induction, 
valid sequences of insertions correspond to valid sequences of compactions.
The current tree $T$ at time $t$ corresponds (via \Cref{thm:bijection})
to the schedule of compactions over $[1,t]$.
The nodes along the right spine of $T$ correspond to the files in the stack at time $t$.
We summarize this as follows:

\begin{sublemma}\label{lemma:online tree}
  The $c$-competitive online algorithms for the problem above
  correspond to the $c$-competitive online algorithms for \bmcf.
\end{sublemma}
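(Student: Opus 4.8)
The plan is to package the informal correspondence stated just above the lemma into a precise bijection between \emph{algorithms}, and then verify that this bijection preserves both the ``online'' property and the cost incurred on every instance, so that the competitive ratio is preserved exactly.

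First I would set up the algorithm-level correspondence. Given an online algorithm $A$ for \bmcf, define the tree-building algorithm $\hat A$ by running $A$ and, at each time $t$, translating whatever contiguous block at the top of the stack $A$ merges into the corresponding right-spine insertion of node $t$: append node $t$ as the right child of the bottom spine node if $A$ merges only the newly inserted file, and otherwise insert node $t$ above the spine node $c$ whose stack position is the bottom of the block $A$ merges (moving $c$ to $t$'s left child). The ``straightforward induction'' already indicated supplies the invariant that makes this well-defined: after $\calI[1,t]$ has been processed, the files in $A$'s stack, read from bottom to top, correspond to the nodes on the right spine of $\hat A$'s current tree, read from the root downward, and the current tree equals $\phi$ applied to the partial schedule $A$ has produced through time $t$ (viewed as a schedule for the length-$t$ instance $\calI[1,t]$). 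The same induction shows the map is invertible: from any online tree algorithm $B$ one recovers an online \bmcf algorithm $\check B$ by reading compactions off right-spine insertions, with $\check{\hat A}=A$ and $\hat{\check B}=B$. Crucially, $\hat A$'s action at time $t$ is a fixed function of $A$'s action at time $t$, which depends only on $\calI[1,t]$; hence $\hat A$ is online iff $A$ is, and conversely.

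Next I would invoke \Cref{thm:bijection} to match costs. For a fixed $\calI$, the map $\phi$ is a cost-preserving bijection from schedules for $\calI$ to trees for $\calI$; applied to the schedule $\sigma$ that $A$ produces on $\calI$, it yields precisely the final tree $T$ that $\hat A$ produces (the $t=n$ case of the invariant), so $A(\calI)=\sigma(\calI)=\cost_f(T)$. Taking the minimum over all schedules and using that $\phi$ is onto the trees for $\calI$, $\opt(\calI)=\min_\sigma \sigma(\calI)=\min_T \cost_f(T)$, the optimum of the tree problem on $\calI$. Therefore $A(\calI)\le c\,\opt(\calI)$ holds for every $\calI$ if and only if $\hat A$'s cost is at most $c$ times the tree optimum on every $\calI$; combined with the bijectivity of $A\mapsto\hat A$ on online algorithms, this gives the claimed correspondence of $c$-competitive algorithms.

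The main obstacle is making that induction genuinely airtight, in particular that $\phi$ commutes with taking prefixes: when the compaction at time $t$ merges the whole stack into a single file, the ``last time the stack has size $1$'' jumps to $t$, and one must check that the tree $\phi$ assigns to the length-$t$ prefix is obtained from the length-$(t-1)$ tree exactly by the prescribed right-spine insertion (here, inserting $t$ above the old root, with the entire previous tree becoming its left subtree). I would handle this by proving the stack/right-spine invariant and the ``current tree $=\phi(\text{prefix schedule})$'' claim simultaneously by induction on $t$, using the recursive (last-$k_t{=}1$) description of $\phi$ from the proof of \Cref{thm:bijection}; the remaining verifications are routine bookkeeping.
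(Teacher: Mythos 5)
Your proposal is correct and follows essentially the same route the paper takes: the paper's "proof" is the paragraph immediately preceding the lemma, which asserts by a straightforward induction that right-spine insertions correspond to compactions, that the stack corresponds to the right spine, and that the current tree equals $\phi$ of the prefix schedule, then invokes \Cref{thm:bijection} for cost preservation. You have simply spelled out that sketch (the $A\mapsto\hat A$ packaging, the prefix-commutation of $\phi$, and the online-ness check) in more detail, with no difference in substance.
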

 

\section{Worst-case analysis of linear BMC}
\label{sec:linear worst case}
\setNealCounters

\begin{definition}
  In any tree $T$ for $\calI$,
  let $T_t$, $L_t$, and $R_t$ denote, respectively, the subtree with root key $t$
  and its left and right subtrees.
  In any subtree $T_t$, the keys in $T_t$ form an interval $[i,j]$.
  Let $\ell[T_t] = \ell[i,j] = \sum_{t=i}^j \ell_t$ and $r[T_t] = r[i,j] = \sum_{t=i}^j r_t$. 
  (Define $\ell[L_t] = r[R_t] = 0$ for empty $L_t$, $R_t$.)
\end{definition}

\begin{lemma}[lower bound on \opt for \linearBmc]\label{thm:lower bound}\label{lemma:lower bound}
  For any instance \calI of \linearBmc,
  any schedule $\sigma$, and its tree $T=\phi(\sigma)$,
  \begin{enumerate}\itemsep0in
  \item[(i)] \(\cost(T) = \sum_{t=1}^n \ell_t + r_t + \ell[L_t] + r[R_t]\), and
  \item[(ii)] \(\opt(\calI) 
    \ge \cost(T) -\sum_{t=1}^n\max\{\ell[L_t], \,r[R_t]\} 
    = \sum_{t=1}^n \ell_t + r_t + \min\{\ell[L_t], \,r[R_t]\}\).
  \end{enumerate}
\end{lemma}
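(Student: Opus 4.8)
The plan is to prove part (i) by a direct double-counting argument using the structural facts from \Cref{thm:bijection}, and then derive part (ii) by exhibiting, for each node $t$, an alternative local choice that saves the larger of the two subtree weights.

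For part (i), I would start from the definition $\cost(T) = \sum_t \ell_t\,(1+\leftDepth_T(t)) + r_t\,(1+\rightDepth_T(t))$ (recall $f(k)=k$ for \linearBmc). The term $\sum_t \ell_t\,(1+\leftDepth_T(t))$ counts, for each $t$, one plus the number of left edges on the root-to-$t$ path; equivalently, it counts the pairs $(t,u)$ such that $u$ is an ancestor of $t$ reached by a left edge, i.e.\ $t$ lies in the right subtree $R_u$ of $u$, plus the diagonal term $\sum_t \ell_t$. So $\sum_t \ell_t\,(1+\leftDepth_T(t)) = \sum_t \ell_t + \sum_u \sum_{t \in R_u} \ell_t = \sum_t \ell_t + \sum_u \ell[R_u]$. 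Wait --- I need to be careful about which subtree is indexed: an edge from $u$ to a child is a \emph{left} edge exactly when $t$ lies in $L_u$, so $\leftDepth_T(t) = \#\{u : t \in L_u\}$ and hence $\sum_t \ell_t\,\leftDepth_T(t) = \sum_u \ell[L_u]$. Symmetrically $\sum_t r_t\,\rightDepth_T(t) = \sum_u r[R_u]$. Adding the diagonal terms $\sum_t (\ell_t + r_t)$ and reindexing $u$ as $t$ gives exactly $\cost(T) = \sum_{t=1}^n \ell_t + r_t + \ell[L_t] + r[R_t]$, which is (i). The only subtlety is getting the left/right bookkeeping exactly right, and matching it against the ``number of merges'' and ``stack size'' interpretations guaranteed by \Cref{thm:bijection}; I'd sanity-check the identity on a tiny instance (say $n=2$) to be sure the convention is consistent.

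For part (ii), the idea is that $\opt(\calI) \le \cost(T')$ for \emph{any} tree $T'$, and I can build a good $T'$ by modifying $T$ locally. For each node $t$, consider the subtree $T_t$ on key-interval $[i,j]$ with fixed left/right children intervals $[i,t-1]$ and $[t+1,j]$: in the recurrence of \Cref{lemma:recurrence}, the ``splitting cost'' paid at $t$ within its subtree is $\ell[L_t] + r[R_t]$ (the merge cost $\ell[i,t]$ minus the already-counted $\ell_t$, wait—more precisely, reusing (i), the contribution of node $t$ is $\ell_t + r_t + \ell[L_t] + r[R_t]$, where $\ell[L_t]$ and $r[R_t]$ are the ``penalties'' for the subtree structure). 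The key observation is that one can always reorganize so that node $t$'s penalty becomes $\min\{\ell[L_t], r[R_t]\}$ instead of $\ell[L_t] + r[R_t]$: if $\ell[L_t] \le r[R_t]$ keep the left subtree as a subtree of $t$ but move the right subtree's mass out of $t$'s ``right-penalty region'', and symmetrically otherwise. Concretely, I expect the clean way to phrase this is: choose the tree $T^*$ that \emph{minimizes} $\cost$; by (i), $\opt(\calI) = \cost(T^*) = \sum_t \ell_t + r_t + \ell[L^*_t] + r[R^*_t]$; then argue via an exchange/local-move argument that in an optimal tree one always has, at every node, an arrangement achieving $\min$ rather than sum --- or more robustly, directly lower-bound $\opt$ by showing that for the \emph{given} $T$, $\cost(T) - \sum_t \max\{\ell[L_t], r[R_t]\}$ is a lower bound on $\opt(\calI)$ by relating it to a sum over the recurrence that every schedule must pay.

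The main obstacle, I expect, is part (ii): making precise the claim that $\opt$ ``saves the max at every node simultaneously.'' A per-node local move changes the subtree intervals, which changes $\ell[L_t]$ and $r[R_t]$ for \emph{ancestors} of $t$ as well, so the savings are not obviously independent across nodes. The cleanest route is probably not an exchange argument on a single tree but rather to massage the recurrence \eqref{eq:recurrence}: unroll $\opt_0[1,n]$ along any fixed sequence of split points (which is exactly a tree $T$), observe that the cost decomposes as $\sum_t (\ell_t + r_t + \ell[L_t] + r[R_t])$ minus terms corresponding to a ``better'' choice, and note that at each split one has the freedom to recurse on the side whose weight is smaller first --- but since \emph{both} sides must be handled, the honest statement is that the $+ \ell[L_t]$ and $+ r[R_t]$ are each unavoidable for \emph{some} tree but the minimum over trees lets us pick, at the top split and recursively, the cheaper orientation, giving $\min$ not sum. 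I would verify that summing the recurrence's unavoidable per-split cost $\ell[i,s] + r[s,j]f(1)$ over the tree, minus $\sum_t \max\{\ell[L_t], r[R_t]\}$, telescopes correctly to the stated bound, and that the inequality direction is right (it's a lower bound on $\opt$, so I need every schedule to pay at least this much, which should follow because $f(1)=1$ and $\ell[i,s] \ge \ell_t + \min\{\ell[L_t], r[R_t]\}$ is exactly the per-node contribution). If the telescoping is cleaner stated as an identity plus a per-node inequality, I'd present it that way.
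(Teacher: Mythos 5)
Part (i) of your proposal is correct and matches the paper (which simply says ``follows by calculation''); your double-counting via $\sum_t \ell_t\,\leftDepth_T(t) = \sum_u \ell[L_u]$ and symmetrically for $r$ is exactly the right bookkeeping, so no issue there.

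Part (ii) has a genuine gap, and you correctly identify the obstacle but do not resolve it. You observe that per-node local exchanges change $\ell[L_u],r[R_u]$ for ancestors $u$, so the ``savings'' are not independent across nodes, and you fall back on a recurrence-telescoping plan. But that plan does not go through as stated: if you define, for the fixed tree $T$, a per-interval target $\Phi[i,j] = \ell_s + r_s + \min\{\ell[L_s],r[R_s]\} + \Phi[i,s-1] + \Phi[s+1,j]$ with $s$ the root of $T$'s subtree on $[i,j]$, and try to prove $\opt[i,j]\ge\Phi[i,j]$ by induction on the recurrence~\eqref{eq:recurrence}, the induction breaks as soon as the optimal split $s'$ differs from $s$: the subproblems $[i,s'-1]$ and $[s'+1,j]$ no longer align with $T$'s subtrees, so the inductive hypothesis does not apply to them. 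You gesture at ``choosing the cheaper orientation'' and ``telescoping correctly,'' but neither is made precise, and the direction you actually need (a lower bound on every schedule's cost in terms of the arbitrary given $T$) is exactly what fails here.

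The missing idea is the one the paper uses: do not argue on the recurrence at all, but instead transform the optimal tree $T^*$ into the target tree $T$ top-down, one ``promote to root'' operation per node. Concretely, given $T^*$, let $s$ be the root of $T$, split $T^*\setminus\{s\}$ into the BST $T'_<$ on $[1,s-1]$ and $T'_>$ on $[s+1,n]$ (reparent each node to its first ancestor on the correct side of $s$), and make $s$ the new root with those as subtrees. The crucial observation is that this single operation increases $\cost$ by at most $\max\{\ell[L_s],r[R_s]\}$ \emph{measured in the target tree $T$}: if $T^*$'s root was $<s$, only left-depths inside $[1,s-1]$ can grow, each by at most one, costing at most $\ell[L_s]$; symmetrically if $T^*$'s root was $>s$, the increase is at most $r[R_s]$. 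Recursing into the two subintervals (which now coincide with $T$'s subtree intervals, since $s$ is correct) and summing gives $\cost(T)\le\cost(T^*)+\sum_t\max\{\ell[L_t],r[R_t]\}$, i.e.\ the claimed lower bound. Because the target $T$ is fixed up front and the transformation proceeds root-first, the $\ell[L_t],r[R_t]$ in the bound are all taken with respect to $T$ and never shift under you --- this is precisely how the paper sidesteps the dependence-across-nodes problem you flagged. Without this transformation (or an equivalent), your part~(ii) is not a proof.
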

\begin{proof}
  Part (i) follows by calculation from the definition of $\cost(T)$.
  To prove Part (ii), let $T^*$ be a tree of cost $\opt(\calI)$.
  Transform $T^*$ into $T$, without increasing the cost by much, as follows.
  Let $s$ be the root of $T$. 
  First transform $T^*$ into a tree $T'$ with $s$ at the root.
  \begin{figure}\sf
    \centering
    \includegraphics[height=1in]{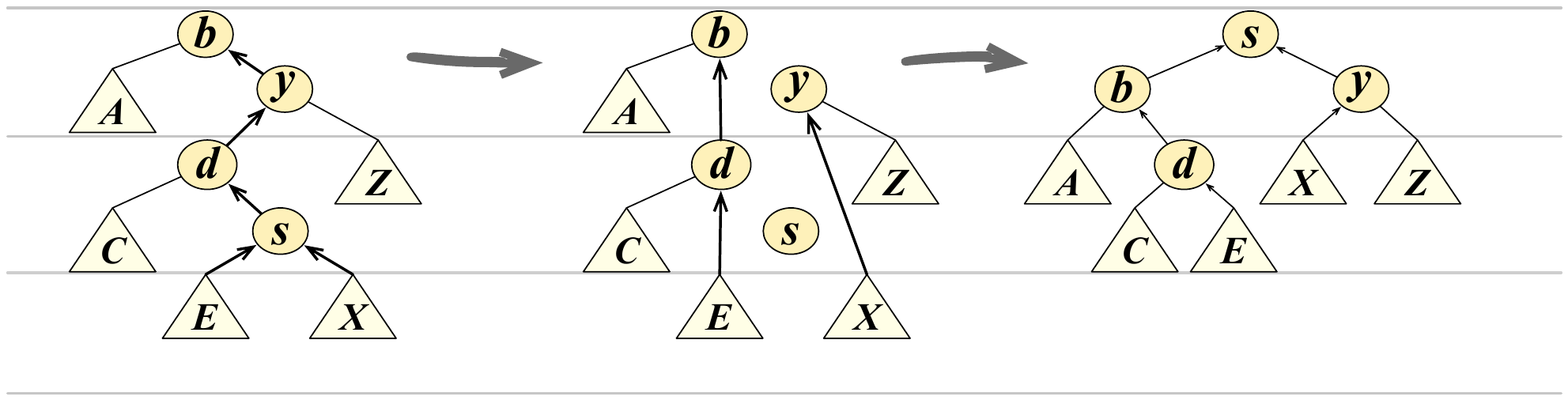}
    \vspace*{-0.1in}
    \caption{\sf
      In the proof of \Cref{thm:lower bound},
      moving $s$ to the root to transform $T^*$ into $T'$.
    }\label{fig:arman lower bound}
  \end{figure}
  In $T^*$, 
  for each node $x < s$,
  change the parent to the first ancestor less than $s$ (if any).
  For each node $x > s$,
  change the parent to the first ancestor greater than $s$ (if any).
  This splits $T^*\setminus\{s\}$ into a tree $T'_<$ for $[1,s-1]$
  and a tree $T'_>$ for $[s+1,n]$,
  as shown in \Cref{fig:arman lower bound}.
  Make $s$ the root of $T'$, with $T'_<$ as the left subtree and $T'_>$ as the right subtree.
  This defines $T'$.
  To complete the transformation, transform the left and right subtrees of $T'$
  recursively into, respectively, the left and right subtrees of $T$.

  How are left and right depths of nodes changed 
  in the transformation from $T^*$ to $T'$?
  If the root of $T^*$ is smaller than $s$
  (as in \Cref{fig:arman lower bound})
  then the only depths that may increase
  are the left depths of nodes in the left subtree of $T'$,
  which increase by at most 1.
  Hence, $\cost(T') \le \cost(T^*) + \ell[L_t]$. 
  Similarly, if the root of $T^*$ is larger than $s$, then $\cost(T') \le \cost(T^*) + r[R_t]$. 
  It follows that $\cost(T') \le \cost(T^*) + \max\{\ell[L_s], \, r[R_s]\}$. 
  
  By induction, transforming $T'$ into $T$ by recursing into $T'$s two subtrees
  increases the cost by at most $\sum_{t\ne s} \max\{\ell[L_t], \, r[R_t]\}$,
  so the total cost increase in transforming $T^*$ into $T$ is at most
  $\sum_{t=1}^n\max\{\ell[L_t], \, r[R_t]\}$. 
  It follows that
  $\opt(\calI) = \cost(T^*) 
  \ge \cost(T) - \sum_{t=1}^n\max\{\ell[L_t], \, r[R_t]\}$.
\end{proof}

For intuition, note that \Cref{thm:lower bound} gives
a fast offline 2-approximation algorithm:
\begin{corollary}\label{cor:2-approx}
  There is an $O(n)$-time, offline 2-approximation algorithm for \linearBmc.
\end{corollary}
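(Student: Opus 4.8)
\emph{Proof proposal.}
The plan is to build, in $O(n)$ time, a tree $T$ for $\calI$ whose cost $\cost(T)$ is at most $2\,\opt(\calI)$; by \Cref{thm:bijection} the corresponding schedule $\phi^{-1}(T)$ then has cost $\cost(T)\le 2\,\opt(\calI)$, proving the corollary. The tree is built top-down by a ``balanced split'': for the subtree on key interval $[i,j]$, choose as root the least index $s\in[i,j]$ with $\ell[i,s-1]\ge r[s+1,j]$ (such $s$ exists, since $s=j$ gives $\ell[i,j-1]\ge 0=r[j+1,j]$), then recurse on $[i,s-1]$ and $[s+1,j]$.

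For correctness I would first prove a per-node bound: if $s$ is the root of the subtree on $[i,j]$, then $|\ell[L_s]-r[R_s]|\le \ell_{s-1}+r_s$ (with $\ell_0:=0$). If $s=i$ then $\ell[L_s]=r[R_s]=0$. Otherwise $s-1$ fails the defining inequality, so $\ell[i,s-2]<r[s,j]$, while $\ell[i,s-1]\ge r[s+1,j]$ gives $\ell[L_s]\ge r[R_s]$; subtracting, $0\le \ell[L_s]-r[R_s]=(\ell[i,s-2]-r[s,j])+\ell_{s-1}+r_s< \ell_{s-1}+r_s$. Summing over all $n$ nodes (each key is the root of exactly one subtree), and using $\sum_t\ell_{t-1}\le\ell[1,n]$ and $\sum_t r_t=r[1,n]$, gives $\sum_t|\ell[L_t]-r[R_t]|\le \sum_t(\ell_t+r_t)$. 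Since $\ell[L_t]+r[R_t]=|\ell[L_t]-r[R_t]|+2\min\{\ell[L_t],r[R_t]\}$, \Cref{lemma:lower bound}(i) yields
\[
  \cost(T)=\sum_{t=1}^n\big(\ell_t+r_t+\ell[L_t]+r[R_t]\big)\ \le\ 2\sum_{t=1}^n\big(\ell_t+r_t+\min\{\ell[L_t],r[R_t]\}\big),
\]
and the right-hand side equals $2\,\opt(\calI)$ by \Cref{lemma:lower bound}(ii) applied to $T$ itself.

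For the running time: precompute prefix sums of $(\ell_t)$ and $(r_t)$ in $O(n)$ time, so that each test $\ell[i,s-1]\ge r[s+1,j]$ takes $O(1)$ time, and the valid $s$ form a suffix $\{s^*,s^*+1,\dots,j\}$ of $[i,j]$ because $\ell[i,s-1]-r[s+1,j]$ is non-decreasing in $s$. Find $s^*$ by a \emph{two-sided exponential search} — probe $s=i,i{+}1,i{+}2,i{+}4,\dots$ from the left and $s=j,j{-}1,j{-}2,j{-}4,\dots$ from the right in lockstep, then binary-search the first bracketing window — at cost $O\!\big(\log(1+\min(s^*-i,\,j-s^*))\big)=O\!\big(\log(1+\min(|L_{s^*}|,|R_{s^*}|))\big)$. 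The total time is then $O(n)+\sum_{v\in T}O\!\big(\log(1+\min(|L_v|,|R_v|))\big)$, which I would finish with the standard fact that $\sum_{v\in T}\log\!\big(1+\min(|L_v|,|R_v|)\big)=O(n)$ for every $n$-node binary tree; this follows by induction with the strengthened hypothesis $\sum_{v\in T}\log(1+\min(|L_v|,|R_v|))\le 2n-\log(1+n)$ (the inductive step reduces to the elementary inequality $1+n\le 4(1+b)$ for $b$ the larger child-subtree size). The main obstacle is exactly this time bound: the $2$-approximation itself is immediate once the root is chosen to match \Cref{lemma:lower bound}, and it is the combination of the two-sided search with the $\sum_v\log(1+\min)=O(n)$ estimate (essentially Mehlhorn's linear-time nearly-optimal binary-search-tree construction) that buys linear rather than $O(n\log n)$ time.
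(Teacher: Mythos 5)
Your 2-approximation argument is essentially the paper's: you build $T$ top-down with a balanced split chosen so that $|\ell[L_s]-r[R_s]|$ is small, then compare $\cost(T)$ to the lower bound of \Cref{thm:lower bound}(ii) applied to the very same $T$. The paper phrases the balance condition as a per-node inequality $|\ell[L_t]-r[R_t]|\le\ell_t+r_t$ and chooses the split as the \emph{maximum} $s$ with $\ell[1,s-1]\le r[s,n]$; you choose the \emph{least} $s$ with $\ell[i,s-1]\ge r[s+1,j]$ and obtain the slightly different per-node bound $|\ell[L_s]-r[R_s]|\le\ell_{s-1}+r_s$, then sum globally. Both variants work and buy the same factor $2$; the paper's version is a bit cleaner since the slack at node $t$ is charged to $\ell_t+r_t$ at the same node rather than shifted by one index.

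Where you genuinely go beyond the paper is the running-time analysis. The paper says ``use binary search to find $s$ and recurse,'' which, taken literally, gives $\sum_v O(\log(j_v-i_v+1))=O(n\log n)$ in the worst case (e.g.\ when the tree degenerates to a path), not the claimed $O(n)$. Your two-sided exponential search costs $O(\log(1+\min(|L_v|,|R_v|)))$ at node $v$, and the fact that $\sum_{v\in T}\log(1+\min(|L_v|,|R_v|))=O(n)$ over any $n$-node binary tree (your inductive bound, or equivalently the standard analysis of Mehlhorn's / Fredman's nearly-optimal BST construction) is exactly what is needed to make the $O(n)$ claim hold. So your proposal is correct and in fact supplies a detail the paper's proof glosses over.
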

\begin{proof}
  Fix an instance $\calI$, schedule $\sigma$ and its tree $T$.
  Say node $t$ in $T$ is \emph{balanced} if
  $|\ell[L_t] - r[R_t]| \le \ell_t + r_t$.

  By \Cref{thm:lower bound}(i),
  \(
  \cost(T) 
  = \sum_{t=1}^n~ \ell_t + r_t ~+~ \ell[L_t] + r[R_t]\).
  Comparing this sum term-by-term with the lower bound on $\opt(T)$ 
  from \Cref{thm:lower bound}(ii),
  it follows that if every node in $T$ is balanced, 
  then $\cost(T)\le 2\opt(\calI)$:
  \[\cost(T) 
  = \sum_{t=1}^n~ \ell_t + r_t 
  ~+~ 2\min\{\ell[L_t], \,r[R_t]\}
  ~+~ |\ell[L_t] - r[R_t]|
  \le 
  \sum_{t=1}^n~ 2(\ell_t + r_t ~+~ \min\{\ell[L_t], \,r[R_t]\})\]
  To construct such a $T$,
  use binary search to find the maximum $s\in[1,n+1]$ 
  such that $\ell[1,s-1] \le r[s,n]$ (so $\ell[1,s]>r[s+1,n]$, this ensures $s$ is balanced).
  Make $s$ the root of $T$, then recurse on $[1,s-1]$ and $[s+1,n]$.
\end{proof}

We note without proof that \Cref{thm:lower bound} and \Cref{cor:2-approx} 
extend to \bmcf for any concave $f$.\footnote
{Define $f_d(k) = f(k) - f(d)$ if $d>0$, and $f_0 = f$.
Let $d(t)=\rightDepth_T(t)$.
Then 
(i) \(\cost_f(T) = \sum_{t=1}^n \ell_t +  \ell[L_t] + (r_t + r[R_t])\, f_{d(t)}(1)\)
and
(ii) \(\opt(\calI) \ge \cost_f(T) - 
\sum_{t=1}^n~ \ell_t + r_t\, f_{d(t)}(1) ~+~ 
\min\{\ell[L_t], \, r[R_t] \,f_{d(t)}(1)\}\).}

\smallskip

Next we develop the online algorithm $A$. 
We describe $A$ as an online algorithm for maintaining a tree $T$,
per \Cref{lemma:online tree}.
To guarantee $\lambda$-competitiveness, we ensure that
$\cost_f(T)$ is at most $\lambda$ times the lower bound $T$ gives via \Cref{thm:lower bound}.
$A$ maintains the following invariant on $T$:
\begin{equation}\label{eq:invariant}
  \forall s\in [1,t].~~ \ell[L_s] \ge r[R_s].
\end{equation}
At each time $t$, $A$ inserts the new node with key $t$
\emph{as high as possible on the right spine, subject to Invariant~\eqref{eq:invariant}.}
(Inserting $t$ at the bottom of the spine is one way to maintain the invariant.)

\begin{theorem}[\linearBmc worst-case analysis]\label{thm:linear worst case}
  The online algorithm $A$ above
  is $O(1)$-competitive on those instances $\calI$ of \linearBmc 
  such that $\ell_t = O(r_t)$ for all $t$.
\end{theorem}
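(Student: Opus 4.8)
The plan is to bound $\cost_f(T)$, where $T$ is the tree maintained by the online algorithm $A$, in terms of the lower bound from \Cref{thm:lower bound}(ii), namely $\sum_{t=1}^n \ell_t + r_t + \min\{\ell[L_t], r[R_t]\}$. By \Cref{thm:lower bound}(i), $\cost(T) = \sum_t \ell_t + r_t + \ell[L_t] + r[R_t]$, so it suffices to show that for every node $t$, $\ell[L_t] + r[R_t] = O(\ell_t + r_t + \min\{\ell[L_t], r[R_t]\})$. Under Invariant~\eqref{eq:invariant} we have $r[R_t] \le \ell[L_t]$ always, so $\min\{\ell[L_t], r[R_t]\} = r[R_t]$ and the target reduces to showing $\ell[L_t] = O(\ell_t + r_t + r[R_t])$ at every node $t$. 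This is the crux: the invariant forces $r[R_t]$ small, but we must prevent $\ell[L_t]$ from being much larger than $r[R_t] + \ell_t + r_t$; otherwise the tree degenerates into a left-heavy chain and $\cost(T)$ blows up.

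First I would analyze what the insertion rule does. When node $t$ is inserted, it is placed as high as possible on the right spine subject to the invariant. So $t$ becomes the parent of some subtree $c$ (the part of the old right spine below the insertion point), which becomes $t$'s left child, and $L_t$ consists exactly of that subtree $c$ together with whatever it contained. The key point is the maximality of the insertion: if $t$ had been inserted one position higher, the invariant would have been violated at the node $p$ just above the insertion point — meaning after that hypothetical insertion, $\ell[L_p] < r[R_p]$. Translating this, at the moment of insertion, the left subtree that $t$ absorbs has total $\ell$-weight strictly less than the $r$-weight of the right subtree it would have had to sit over, plus $r_t$. This should give a bound of the form $\ell[L_t] \le r[R_p^{\text{old}}] + r_t$ or similar, where the right-hand side refers to weights that, via the ancestor structure, are controlled by $r$-values of nodes that are descendants-to-come. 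I would make this precise by arguing that $\ell[L_t]$ at the time $t$ is inserted is bounded by $O(r[\text{something}] + r_t)$, and then note $L_t$ only grows afterward by insertions into it, each of which again respects the invariant locally.

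The cleanest route is probably an amortized/charging argument directly on the final tree $T$: fix any node $t$ and consider its left subtree $L_t = T_c$ where $c$ is $t$'s left child. I claim $\ell[L_t] = \ell[T_c] = O(r[T_c] + \ell_c + r_c)$ by induction down the tree, combined with the fact that the invariant holds at $c$ as well: $\ell[L_c] \ge r[R_c]$, and $\ell[T_c] = \ell_c + \ell[L_c] + \ell[R_c]$. Here is where the hypothesis $\ell_t = O(r_t)$ for all $t$ enters decisively — it lets me convert any $\ell$-weight into a comparable $r$-weight, so the invariant $r[R_s] \le \ell[L_s]$ can be "reversed" up to constants, giving two-sided control $\ell[L_s] = \Theta(r[R_s])$ up to the additive local terms. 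Propagating this through the recursive structure of $T$ (each subtree's left weight comparable to its right weight, modulo the root's own $\ell_t + r_t$) should yield $\ell[L_t] + r[R_t] = O(\ell_t + r_t + \min\{\ell[L_t], r[R_t]\})$ summed appropriately, hence $\cost_f(T) = O(1) \cdot \opt(\calI)$.

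The main obstacle is making the "reversal" of the invariant quantitative without the constant degrading geometrically with depth. Naively, $\ell[L_s] \ge r[R_s]$ in one direction plus $\ell_s = O(r_s)$ in each node gives $\ell[L_s] \le C\, r[R_s] + (\text{local terms})$ only if I can bound how $\ell[L_s]$ accumulates across the left subtree without each level multiplying the constant. I expect this requires exploiting the insertion rule's maximality more carefully — specifically, that when $t$ was inserted it sat as high as possible, so $L_t$ is not just any invariant-respecting subtree but the largest one, which should mean $\ell[L_t]$ is tightly pinned just below $r[R_t] + O(\ell_t + r_t)$ rather than being allowed to be an arbitrary small fraction. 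Getting that tight two-sided pin at every node, uniformly, and then summing, is the technical heart; once it is in hand the rest is \Cref{thm:lower bound} plus arithmetic.
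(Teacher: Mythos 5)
You correctly reduce the problem to bounding $\cost(T)$ by the lower bound of \Cref{thm:lower bound}, and you correctly identify that the insertion rule's maximality is the key combinatorial fact (it gives $\ell[L_c] < r_t + r[R_c]$, where $c$ is the node that becomes $t$'s left child). But your plan of attack is node-by-node: you want $\ell[L_t] + r[R_t] = O(\ell_t + r_t + \min\{\ell[L_t], r[R_t]\})$ to hold \emph{at every node $t$}. That inequality is false in general, and cannot be rescued by more careful use of the hypotheses. When node $t$ captures subtree $T_c$ at insertion time, the quantity $\ell[L_t]=\ell[T_c]$ is governed by $c$'s data ($\ell_c$, $r[R_c]$, \ldots), while the right-hand side you need involves $r[R_t]$ --- the weight of $t$'s future right subtree, a disjoint part of the tree about which the algorithm knows nothing at time $t$. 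There is no reason for $\ell_c + r[R_c]$ to be $O(\ell_t + r_t + r[R_t])$: for instance if $t$ ends up as the root and receives few right descendants, $r[R_t]$ can be tiny while $\ell[T_c]$ is $\Theta(\ell[1,n])$. You sense this obstruction (your last paragraph worries about a ``tight two-sided pin at every node, uniformly''), but you do not resolve it, and it cannot be resolved pointwise.

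The paper's proof avoids the pointwise claim entirely via a potential-function amortization, which is the ingredient you are missing. It defines $\Phi(T) = \sum_{x\in T}(\ell_x + r_x + r[R_x])\cdot\bigl(1$ if $x$ is on the right spine, $2$ otherwise$\bigr)$, notes that the invariant makes $\Phi(T)$ an $O(1)$ multiple of the lower bound, and shows $\Delta\!\cost(T) \le (1+\alpha)\,\Delta\Phi(T)$ per insertion. The mechanism that saves the day is precisely the spine/off-spine coefficient: when $t$ is inserted above $c$, the entire subtree $T_c$ leaves the right spine, so every node in it has its potential coefficient jump from $1$ to $2$, contributing at least $\ell_c + r_c + r[R_c]$ to $\Delta\Phi$. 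That is what pays for the new $\ell[T_c]$ term in $\Delta\!\cost$; combined with your correct observations $\ell[L_c] < r_t + r[R_c]$ (maximality) and $\ell[R_c]\le\alpha\,r[R_c]$ (the hypothesis $\ell_x = O(r_x)$), the per-step inequality follows. In other words, the excess cost at node $t$ is charged to the lower-bound contributions of \emph{other} nodes --- $c$ and its descendants --- rather than to $t$'s own term. This cross-node charging is exactly what a pointwise bound cannot express, which is why your proposal, as written, has a genuine gap even though your instincts about which facts to use are on target.
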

\begin{proof}
  Fix any instance $\calI$ such that $\ell_t \le \alpha\,r_t$ for all $t$ (where $1\le \alpha = O(1)$).
  We use an amortized analysis to show that $\cost(T)$ 
  is always at most $1+\alpha$ times the lower bound 
  that $T$ gives on $\opt$ via \Cref{thm:lower bound}(ii).
  Let ${\cal S}(T)$ denote the nodes in $T$ that are on the right spine.  

  As $A$ maintains $T$, define the \emph{potential} of $T$ to be
  \begin{equation}\label{eq:potential}
    \Phi(T) = \sum_{x\in T} \big(\ell_x + r_x + r[R_x]\big)\times
    \begin{cases} 
      1 & x\,\in {\cal S}(T) \\
      2 & x\,\not\in {\cal S}(T).
    \end{cases}
  \end{equation}
  By inspection of $\Phi$,
  Invariant~\eqref{eq:invariant} 
  implies that $\Phi(T)$ is $O(1)$ times the lower bound from \Cref{thm:lower bound}.
  By calculation, at time step $t$, the increase in $\cost(T)$ 
  is $k_t\, r_t + \ell_t + \ell[T_c]$, where $k_t$ is the number of nodes on the right spine after time $t$
  and $c$ is the node that becomes the left child of $t$ after the insertion.
  (as in \Cref{fig:bijection}(b)).
  To finish, we verify by calculation (using $\ell[R_c]\le \alpha\, r[T_c]$)
  that this increase is less than $1+\alpha$ times the increase in $\Phi(T)$.
  That is, $\Delta\!\cost(T) \le (1+\alpha)\,\Delta \Phi(T)$. 
  
  Consider the insertion of node $t$.
  Recall $\cost(T) = \sum_{x\in T} \ell_x + r_x + \ell[L_x] + r[R_x]$.
  First consider the case 
  when $t$ is inserted at the bottom of the right spine.
  Then $\cost(T)$ increases by $\ell_t + k_t\, r_t$.
  The potential increases by $\ell_t +  (k_t+1)\, r_t$, so we are done.
  Otherwise, $t$ is inserted along the right spine,
  with node $c$ on the spine becoming the left child of $t$.
  Let $k_t$ be the length of the spine after the insertion.
  Now,
  \begin{align}
    \Delta\Phi(T) &\ge k_t\, r_t + \ell_t + \ell_c + r[R_c]
    &&\text{Inspecting $\Phi$, using that $c$ leaves spine ${\cal S}(T)$.}
       \label{eq:pot increase}
    \\
    \Delta\!\cost(T) &= k_t\, r_t + \ell_t + \ell[T_c] 
    && \text{Using $L_t = T_c$ and $R_t = \emptyset$ and def'n of $\cost$.}
       \label{eq:cost increase}
    \\
    \ell[T_c] & = \ell_c + \ell[L_c] + \ell[R_c] 
    && \text{By definition of $\ell[X]$.} 
       \label{eq:Tc}
    \\
    \ell[L_c] & < r_t + r[R_c]
    && \text{By the algorithm's choice of } c. 
       \label{eq:Lc}
    \\
    \ell[R_c] & \le \alpha\, r[R_c] 
    && \text{By the assumption } \forall x.~\ell_x \le \alpha r_x.  
       \label{eq:Rc}
    \\
    \Delta\!\cost(T) &< k_t\, r_t + \ell_t + \ell_c + r_t + (1+\alpha)\, r[R_c]
    && \text{Transitively from~\eqref{eq:cost increase}--\eqref{eq:Rc}.}
       \label{eq:cost increase 2}
    \\
    \Delta\!\cost(T) & < (1+\alpha)\,\Delta \Phi(T)
    && \text{Comparing~\eqref{eq:pot increase} and~\eqref{eq:cost increase 2}.} \notag
  \end{align}
\end{proof}


\section{Average-case analyses of BMC$_K$ and linear BMC}
\label{sec:average case}
\setNealCounters

\begin{theorem}\label{thm:average case}
  \LatencyBmc and \linearBmc have online algorithms $A$ and $B$, respectively, 
  that are asymptotically 1-competitive in expectation on random inputs $\calI$ 
  with bounded, i.i.d.~requests.
  Let $\calI=\langle (\ell_t, r_t) \rangle_t$ be a random sequence of $n$ i.i.d.~pairs from 
  any bounded probability distribution over $\Rp\!\times\Rp$.  
  Let $(\lbar, \rbar) = (\E[\ell_t],\E[r_t])$ for all $t$.
  For \latencyBmc,
  \[\E_{\calI}[A(\calI)]\,\sim\, \E_{\calI}[\opt(\calI)]\, \sim\, \lbar\, K n^{1+1/K}/c_K\]
  where $c_K = (K+1)/(K!)^{1/K}$ (so $c_K\rightarrow e$ for large $K$). 
  \[\E_{\calI}[B(\calI)]   \,\sim\, \E_{\calI}[\opt(\calI)]\,\sim\, \beta\, n \log_2 n,~~~~~~\,\]
  where $\beta$ satisfies $1/2^{\lbar/\beta} + 1/2^{\,\rbar/\beta}$,
  so $\beta = \Theta(\lbar + \rbar)/\ln(1+ \max(\lbar/\rbar,\, \rbar/\lbar))$.
\end{theorem}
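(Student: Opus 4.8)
Throughout, let $\overline\calI$ denote the \emph{uniform} instance with $(\ell_t,r_t)=(\lbar,\rbar)$ for every $t$. The plan is to (1) pin down $\opt(\overline\calI)$ using the bijection of \Cref{thm:bijection}, (2) show a random bounded i.i.d.\ instance has asymptotically the same optimum, and (3) exhibit online algorithms $A,B$ whose expected cost matches $\opt(\overline\calI)$; since $\opt(\calI)\le A(\calI)$ pointwise, sandwiching then yields $\E_\calI[A(\calI)]\sim\E_\calI[\opt(\calI)]\sim\opt(\overline\calI)$, which we evaluate.

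\smallskip

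\textbf{Step 1 (the uniform optimum).} By \Cref{thm:bijection}, for \latencyBmc on $\overline\calI$ a schedule is an $n$-node BST $T$ with $\rightDepth_T(t)\le K-1$ of cost $\lbar\sum_t(1+\leftDepth_T(t))=\lbar n+\lbar\sum_v|L_v|$; writing each node's root-path as a word over $\{L,R\}$, we must pick a prefix-closed set $S$ of $n$ words, each with at most $K-1$ $R$'s, minimizing $\sum_{s\in S}(1+\#_L(s))$. This is minimized (up to $1+o(1)$) by taking all words with at most $a$ $L$'s, where $a\sim(K!\,n)^{1/K}$ is forced by $|S|=n$; evaluating $\sum_{j\le a}(1+j)\sum_{i\le K-1}\binom{i+j}{i}$ by hockey-stick identities gives $\opt(\overline\calI)\sim\lbar K n^{1+1/K}/c_K$, $c_K=(K+1)/(K!)^{1/K}$. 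For \linearBmc on $\overline\calI$, \Cref{thm:bijection} makes a schedule an alphabetic prefix code on $n$ equal-weight symbols with letter costs $(\lbar,\rbar)$; by the known asymptotics for such codes (the cited alphabetic-coding literature), $\opt(\overline\calI)\sim n(\log_2 n)/C$ where $C$ is the channel capacity solving $2^{-\lbar C}+2^{-\rbar C}=1$, i.e.\ $\opt(\overline\calI)\sim\beta n\log_2 n$ with $\beta=1/C$ satisfying $1/2^{\lbar/\beta}+1/2^{\rbar/\beta}=1$; analyzing this equation when $\lbar\asymp\rbar$ and when $\max(\lbar/\rbar,\rbar/\lbar)$ is large gives $\beta=\Theta(\lbar+\rbar)/\ln(1+\max(\lbar/\rbar,\rbar/\lbar))$.

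\smallskip

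\textbf{Step 2 (lower bound $\E_\calI[\opt(\calI)]\ge(1-o(1))\opt(\overline\calI)$).} For \latencyBmc, \Cref{thm:bijection} gives that every schedule's cost on $\calI$ equals $\ell[1,n]+\sum_v\ell[L_v]$. Set $\tau=n^{1/K}/\log n$. A Chernoff bound with slowly vanishing slack, union-bounded over the $O(n^2)$ intervals, shows that with high probability $\ell[i,j]\ge(1-o(1))\lbar(j-i+1)$ for every interval of length $\ge\tau$. On this event, for any valid tree the left-subtrees of size $<\tau$ have total size $<\tau n=o(n^{1+1/K})$, so $\sum_v\ell[L_v]\ge(1-o(1))\lbar\big(\sum_v|L_v|-\tau n\big)\ge(1-o(1))\big(\opt(\overline\calI)-\lbar n\big)$, using Step~1's characterization $\min_T\sum_v|L_v|=\opt(\overline\calI)/\lbar-n$; adding $\E[\ell[1,n]]=\lbar n$ finishes this case. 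For \linearBmc the small subtrees carry a constant fraction of the cost, so instead I would \emph{coarsen}: partition $[1,n]$ into $\Theta(n/w)$ consecutive blocks of size $w=\Theta(\log^3 n)$; with high probability every block's $\ell$- and $r$-totals are $(1\pm o(1))$ times $w\lbar$ and $w\rbar$. Using \Cref{thm:lower bound}(i)--(ii), one bounds a schedule's cost on $\calI$ below by the cost of the tree it induces on the $\Theta(n/w)$ near-uniform ``super-items'', and applies the alphabetic-code lower bound at scale $n/w$ with letter costs $\approx w\lbar, w\rbar$: since the capacity scales as $C/w$, this gives $\ge(1-o(1))(w\beta)(n/w)\log_2(n/w)=(1-o(1))\beta n\log_2 n$, while the residual intra-block cost is $O((n/w)\cdot w\log w)=O(n\log\log n)=o(n\log n)$.

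\smallskip

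\textbf{Step 3 (online algorithms and upper bound).} I would take $A$ (for \latencyBmc) to be an online algorithm that maintains, via max-key insertions, a tree approximating the optimal uniform tree for the current number of items; getting $1+o(1)$ rather than a constant factor needs care, as $A$ cannot know $n$ --- this calls for a staged rebuild / growable near-optimal family so that the dominant stage handles essentially all items. Because the input is bounded i.i.d., concentration makes the relevant interval sums $(1\pm o(1))$ their expectations, so $A$'s tree on $\calI$ has cost $\le(1+o(1))\opt(\overline\calI)$ in expectation; similarly $B$ for \linearBmc is the balanced-tree online algorithm of \Cref{thm:linear worst case} (maintaining $\ell[L_s]\approx r[R_s]$), which at every scale $\gg\log n$ reproduces the optimal alphabetic split-ratios, with sub-$\tau$ contributions lower-order as in Step~2, so $\E_\calI[B(\calI)]\le(1+o(1))\opt(\overline\calI)$. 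Combining Steps~1--3, $(1-o(1))\opt(\overline\calI)\le\E_\calI[\opt(\calI)]\le\E_\calI[A(\calI)]\le(1+o(1))\opt(\overline\calI)$ (and likewise for $B$), so all three are $\sim\opt(\overline\calI)$, which by Step~1 equals the claimed $\lbar K n^{1+1/K}/c_K$ and $\beta n\log_2 n$.

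\smallskip

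\textbf{Main obstacle.} The crux is the \linearBmc lower bound of Step~2: unlike the latency case one cannot discard small subtrees, and the tree-space entropy $\Theta(n\log n)$ matches the cost scale, so a naive union bound over trees fails; the coarsening to polylog-size blocks is what brings the effective entropy below the cost scale, but one must then control carefully that the optimal tree for $\calI$ need not be block-aligned, the transfer of the alphabetic-coding lower bound to the block-contracted instance, and the intra-block residual. A secondary difficulty is the design and analysis of the online algorithm $A$ for \latencyBmc: it must attain $1+o(1)$ without knowing $n$ (the paper leaves open whether the simpler \brb already does), so one needs a rebuild scheme whose dominant stage is asymptotically optimal and whose concentration behavior on random inputs can be certified.
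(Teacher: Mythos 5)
Your Step 1 matches the paper's \Cref{lemma:opt uniform}, and your overall sandwich $(1-o(1))\opt(\overline\calI)\le\E[\opt(\calI)]\le\E[A(\calI)]\le(1+o(1))\opt(\overline\calI)$ is a sound skeleton. But two things are off, one of them fatal.

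The fatal problem is your choice of $B$. The online algorithm from \Cref{thm:linear worst case} (and the related 2-approximation of \Cref{cor:2-approx}) chooses roots by balancing $\ell[L_s]$ against $r[R_s]$, i.e.\ it splits $[i,j]$ roughly at $s$ with $(s-i)\lbar\approx(j-s)\rbar$, putting an $\rbar/(\lbar+\rbar)$ fraction on the left. The asymptotically optimal lopsided alphabetic code puts a $2^{-\lbar C}$ fraction on the left, where $C$ is the capacity solving $2^{-\lbar C}+2^{-\rbar C}=1$. These fractions agree only when $\lbar=\rbar$; e.g.\ for $(\lbar,\rbar)=(1,2)$ balance gives a $1/3$ split while capacity gives a $(\sqrt5-1)/2\approx0.618$ split. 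So the balanced-tree algorithm does \emph{not} ``reproduce the optimal alphabetic split-ratios''; it is a $2$-approximation (or $1+\alpha$ under the read-heavy hypothesis), not $1+o(1)$. Your Step 3 as stated therefore fails for \linearBmc. The paper's $B$ is different: a phase-doubling algorithm that estimates $(\lbar,\rbar)$ and then, within each phase, plays a \emph{fixed} optimal uniform schedule for the estimated $(\lbar',\rbar')$.

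The second issue is that you never exploit the observation that makes the paper's upper bounds essentially one line: for a fixed schedule $\sigma$, the map $\calI\mapsto\sigma(\calI)$ is \emph{linear} in $\calI$. Hence $\E[\sigma(\calI)]=\sigma(\overline\calI)$ exactly, giving both $\E[\opt(\calI)]\le\opt(\overline\calI)$ (take $\sigma=\overline\sigma$ optimal for $\overline\calI$) and, for the known-$n$ online algorithm that blindly replays $\overline\sigma$, $\E[A(\calI)]=\opt(\overline\calI)$ exactly. Your ``staged rebuild / growable near-optimal family'' for $A$ is trying to reinvent the unknown-$n$ version of this; the paper's doubling scheme (merge all at each power-of-two time, then replay the optimal uniform schedule for the current estimates) is the clean instantiation, and the linearity argument is what makes each phase tight to $1\pm o(1)$ rather than a constant.

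On your lower bound (Step 2): the latency argument (Chernoff over large intervals, discard sub-$\tau$ left-subtree mass, compare $\sum_v|L_v|$ to its latency-constrained minimum) is a workable alternative to the paper's recurrence-based $\lb_d$ construction. The \linearBmc coarsening to polylog blocks is a genuinely different route and you correctly flag its gaps — any schedule's tree need not be block-aligned, and transferring the alphabetic lower bound to the contracted instance requires an argument you don't supply. The paper sidesteps all of this by defining $\lb[i,j]$ via the \bmcf recurrence with $\lb=0$ on small intervals, showing $\opt(\calI)\ge(1-\eps)\lb[1,n]$ when $\calI$ ``behaves,'' and then converting the recursion tree of $\lb[1,n]$ into a near-optimal schedule for $\overline\calI$ by grafting in optimal subtrees on the $O(1/\delta)$ small leaves, so $\opt(\overline\calI)\le\lb[1,n]+o(\opt(\overline\calI))$. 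That is both complete and simpler, and it handles the two cost functions uniformly.
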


We conjecture that \brb is also asymptotically 1-competitive on bounded i.i.d.~inputs.

\smallskip

Before we prove the theorem, we prove two utility lemmas.
The first characterizes optimal costs on \emph{uniform} instances $\overline\calI$,
that is, $\overline\calI=(\lbar,\rbar)^n$ 
for some $(\lbar,\rbar)\in \Rp\!\times\Rp$:

\begin{sublemma}[uniform instances]\label{lemma:opt uniform}
  Fix any $(\lbar,\rbar) \in \Rp\!\times\Rp$. 
  Let $\overline\calI = (\lbar,\rbar)^n$.
  \begin{enumerate}\itemsep0in
  \item[(i)] For \latencyBmc, $\opt(\overline\calI) \sim \lbar\,K n^{1+1/K}/c_K$,
    for $c_K$ as defined in \Cref{thm:average case}.
  \item[(ii)]
    For \linearBmc, $\opt(\overline\calI) \sim \beta\, n\log n$,
    for $\beta$ such that~$1/2^{\lbar/\beta} + 1/2^{\rbar/\beta} = 1$.
  \end{enumerate}
\end{sublemma}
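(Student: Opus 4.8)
The plan is to reduce both parts, via \Cref{thm:bijection}, to combinatorial facts about binary-tree shapes, and then to sandwich $\opt(\overline\calI)$ between a counting lower bound and an explicit near-optimal construction. On the uniform instance $\overline\calI=(\lbar,\rbar)^n$ the bijection gives $\opt(\overline\calI)=\min_T \sum_{t=1}^n \lbar(1+\leftDepth_T(t)) + \rbar\,f(1+\rightDepth_T(t))$ over $n$-node binary search trees $T$ --- equivalently, since every $n$-node binary-tree \emph{shape} arises from exactly one such $T$, over all shapes. So part (i) becomes $\opt(\overline\calI)=\lbar\bigl(n+\min_T\sum_t\leftDepth_T(t)\bigr)$ with $T$ ranging over shapes of right-depth at most $K-1$, and part (ii) becomes $\opt(\overline\calI)=(\lbar+\rbar)n+\min_T\sum_t\bigl(\lbar\,\leftDepth_T(t)+\rbar\,\rightDepth_T(t)\bigr)$ over all shapes. (These are, respectively, the egg-dropping/Bentley problem and uniform-weight lopsided alphabetic coding; one could instead quote the classical asymptotics and verify the constants, but I would give a self-contained argument.)

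For part (i), I would use the lattice-path bound: in any shape of right-depth $\le K-1$, a node of left-depth $j$ and right-depth $d$ is reached by one of $\binom{j+d}{j}$ root-paths, so the number $N_T(\ell)$ of nodes of left-depth $\le\ell$ obeys $N_T(\ell)\le M(\ell):=\sum_{j\le\ell}\sum_{d<K}\binom{j+d}{j}=\binom{\ell+K+1}{K}-1=\ell^K/K!+O(\ell^{K-1})$. Because $\sum_t\leftDepth_T(t)=\sum_{\ell\ge 0}(n-N_T(\ell))$ and $N_T$ is nondecreasing to $n$, this sum is at least its value when $N_T(\ell)=\min(M(\ell),n)$; with $\ell^\star=(K!\,n)^{1/K}(1+o(1))$ (so $M(\ell^\star)\approx n$), the estimate $\sum_{\ell\le\ell^\star}(n-M(\ell))\sim n\ell^\star-\frac{(\ell^\star)^{K+1}}{(K+1)K!}=\frac{K}{K+1}\,n\,\ell^\star$ yields $\min_T\sum_t\leftDepth_T(t)\ge(1-o(1))\frac{K}{K+1}(K!)^{1/K}n^{1+1/K}=(1-o(1))\,K\,n^{1+1/K}/c_K$. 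For the matching upper bound I would take the ``complete'' right-depth-$(K-1)$ tree (nodes indexed by $\mathrm L/\mathrm R$-strings with at most $K-1$ $\mathrm R$'s); for the largest $\ell_0$ with $M(\ell_0)\le n$, keep all its nodes of left-depth $\le\ell_0$ together with $n-M(\ell_0)$ further nodes of left-depth $\ell_0+1$ chosen to keep the set closed under taking parents (hence a genuine shape). Its left-depth profile is then $\min(M(\cdot),n)$ exactly, matching the lower-bound value, so $\opt(\overline\calI)=\lbar(n+\sum_t\leftDepth_T(t))\sim\lbar\,K\,n^{1+1/K}/c_K$.

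For part (ii), write $w_T(t)=\lbar\,\leftDepth_T(t)+\rbar\,\rightDepth_T(t)$. The count is now governed by $\beta$: for $\lambda>0$ with $e^{-\lambda\lbar}+e^{-\lambda\rbar}<1$, the number of $\mathrm L/\mathrm R$-strings (hence tree nodes) of weight $\le c$ is at most $\sum_{j,d}\binom{j+d}{j}e^{\lambda(c-\lbar j-\rbar d)}=e^{\lambda c}/\bigl(1-e^{-\lambda\lbar}-e^{-\lambda\rbar}\bigr)$, and letting $\lambda$ tend down to $\lambda^\star$, where $e^{-\lambda^\star\lbar}+e^{-\lambda^\star\rbar}=1$ (equivalently $\lambda^\star=(\ln 2)/\beta$, since this is the defining equation $2^{-\lbar/\beta}+2^{-\rbar/\beta}=1$), makes this at most $n^{o(1)}\,2^{c/\beta}$. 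Hence at most $o(n)$ nodes have weight $<(1-\eps)\beta\log_2 n$, so $\sum_t w_T(t)\ge(1-o(1))\beta\,n\log_2 n$ for every shape, i.e.\ $\opt(\overline\calI)\ge(1-o(1))\beta\,n\log_2 n$. For the upper bound I would build $T$ top-down by a \emph{proportional split}: at a subtree holding $m$ consecutive keys, let the root be the key at position $\approx 2^{-\lbar/\beta}m$, so the left and right subtrees hold $\approx 2^{-\lbar/\beta}m$ and $\approx 2^{-\rbar/\beta}m$ keys (summing to $\approx m$, by the definition of $\beta$); by induction a node of weight $w$ holds $\approx 2^{-w/\beta}n$ keys, so the recursion bottoms out at weight $\beta\log_2 n+O(\lbar+\rbar)$ and this $T$ --- a binary search tree, since the split respects key order --- has $\sum_t w_T(t)\le(1+o(1))\beta\,n\log_2 n$. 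Thus $\opt(\overline\calI)\sim\beta\,n\log_2 n$, and the estimate $\beta=\Theta(\lbar+\rbar)/\ln(1+\max(\lbar/\rbar,\,\rbar/\lbar))$ drops out of elementary analysis of the defining equation (checking the cases $\lbar=\rbar$ and $\lbar\gg\rbar$).

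The hard part is making the lower bounds tight: the lattice-path counts only say that few nodes can be cheap, and one must argue --- through the identity $\sum_t\leftDepth_T(t)=\sum_\ell(n-N_T(\ell))$ in part (i) and its weighted analogue in part (ii) --- that no shape beats the cheapest profile consistent with those counts, which is precisely the one realized by the ``complete''/``proportional'' construction, all while keeping the several $o(1)$ errors under control (rounding of split positions; the $n^{o(1)}$ slack in the weight count; the $O(\ell^{K-1})$ term in $M(\ell)$; and the additive lower-order terms $\lbar n$, resp.\ $O(\lbar+\rbar)$ per node, negligible since $n^{1+1/K}\gg n$ and $\log_2 n\to\infty$). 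Once these counting and construction facts are in hand, extracting the constants $c_K=(K+1)/(K!)^{1/K}$ and the root $\beta$ of $2^{-\lbar/\beta}+2^{-\rbar/\beta}=1$ is routine algebra.
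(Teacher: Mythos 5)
Your reduction via \Cref{thm:bijection} is exactly the paper's first move: once $\opt(\overline\calI)$ is identified with the cost of an optimal $n$-node binary search tree under $\cost_f$, part (i) is the egg-dropping/Bentley minimization of total left-depth subject to right-depth at most $K-1$, and part (ii) is uniform-weight lopsided alphabetic coding. The difference is in what happens next. For part (i) the paper does essentially what you do --- it counts nodes via ${K+d\choose K}$, takes $d\sim(K!n)^{1/K}$, and reads off total left-depth $\sim\frac{K}{K+1}dn$ --- so your lattice-path derivation is the same calculation, just spelled out. For part (ii) the paper stops at the identification with lopsided alphabetic codes and cites Golin--Kapoor et al.\ for $\sim\beta\,n\log n$; you instead supply a self-contained proof (a Chernoff/generating-function bound $\sum_{j,d}\binom{j+d}{j}e^{-\lambda(\lbar j+\rbar d)}=(1-e^{-\lambda\lbar}-e^{-\lambda\rbar})^{-1}$ for the lower bound, a Shannon--Fano style proportional split for the upper bound). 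That is a genuine elaboration of what the paper takes on citation, and it is the right argument --- it is precisely how the cited results are proved.

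Two small points to tighten if you write this up in full. In the part (i) upper bound, ``choose $n-M(\ell_0)$ further nodes of left-depth $\ell_0+1$ closed under parents'' needs an ordering argument: a left-depth-$(\ell_0{+}1)$ node whose path ends in R has a parent also of left-depth $\ell_0+1$, so you cannot add nodes in arbitrary order. The clean fix is to start from the complete shape of all $M(\ell_0{+}1)\ge n$ nodes and repeatedly delete a leaf of left-depth $\ell_0+1$ until $n$ remain; the resulting shape is prefix-closed and its profile is $\min(M(\cdot),n)$ as you want. In the part (ii) lower bound, ``letting $\lambda$ tend down to $\lambda^\star$'' should be made quantitative: take $\lambda=\lambda^\star(1+\eta)$ with $\eta=\eta(n)\to 0$ slowly, so the denominator $1-e^{-\lambda\lbar}-e^{-\lambda\rbar}=\Theta(\eta)$ and $e^{\lambda c}=2^{c/\beta}\cdot n^{O(\eta)}$, giving the stated $n^{o(1)}2^{c/\beta}$ bound with all the $o(1)$'s coordinated. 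With those patches your argument is complete.
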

The value of $\beta$ is $\Theta(\max\{\lbar/\log \,\lbar/\rbar,\, \rbar/\log\,\rbar/\lbar\})$. 
\begin{proof}
  By \Cref{thm:bijection}, the optimal costs equal the costs
  of optimal $n$-node binary search trees under an appropriate cost function.
  For uniform instances, 
  these cost functions are well-studied,
  and optimal costs are known 
  to asymptotically equal these quantities
  (e.g.~\cite{bentley_general_1982,golin_more_2008,kapoor_optimum_1989}).
  Here are the details.

  \noindent\emph{(i)}
  For the read-cost function $f$ for \latencyBmc,
  the tree $T$ for $\calI$ that minimizes $\cost_f(T)$ has right-depth at most $K-1$,
  and, subject to that constraint, has $n$ nodes chosen to minimize total left-depth.
  This $T$ is well understood (e.g.~\cite{bentley_general_1982}).
  $T$ has maximum left-depth $d$,
  where, by calculation, $d$ is minimum subject to ${K+ d \choose K} \ge n$, so $d \sim (K! n)^{1/K}$.
  $T$ has total left-depth $\sim \frac{K}{K+1}d n$.
  By \Cref{thm:bijection}, $\opt(\calI) \sim \frac{K}{K+1}d n = K n^{1+1/K} / c_K$.
  \smallskip

  \noindent\emph{(ii)}
  For the read-cost function $f$ for \linearBmc, 
  the tree for $\overline\calI$ that minimizes $\cost_f(T)$
  corresponds to an optimal \emph{lopsided alphabetic code}
  --- a sequence of $n$ distinct (and ordered) binary codewords $C_1,C_2,\ldots,C_n$,
  where the cost of $C_t$ is
  $\lbar$ times the number of zeros in $C_t$ plus $\rbar$ times the number of ones.
  Such codes are well-studied (e.g.,~\cite{golin_more_2008,kapoor_optimum_1989}),
  and have minimum total cost $\sim \beta\, n\log n$.
  By \Cref{thm:bijection}, $\opt(\calI) \sim\beta\, n\log n$.
\end{proof}

As an aside, this approach extends to other special cases.
For example, consider any ``proportional'' instance $\calI$ of \linearBmc
such that, for some $\alpha>0$, each pair $(\ell_t, r_t)$ satisfies $\ell_t = \alpha\, r_t$.
Then $\opt(\calI) \sim \beta\, r[1,n]\,H(p)$,
where $H(p)$ is the entropy of the distribution $p$ such that $p_t = r_t/r[1,n]$, 
and $\beta$ is such that $1/2^{\alpha/\beta} + 1/2^{1/\beta} = 1$~\cite{golin_more_2008}.

\smallskip

Next we prove that one can replace uniform requests by bounded, i.i.d.~requests
without changing optimal asymptotic costs.
For the remainder of the proof,
let $\calI$, $\lbar$, and $\rbar$ be as in \Cref{thm:average case}.
Let $\overline\calI = \E[\calI] = (\lbar, \rbar)^n$.
Take $\delta = 100\, U \log(n)/\,n\eps^2$, 
where $U \ge \max_t \max (\ell_t/\lbar, r_t/\rbar)$ 
gives an absolute upper bound on lengths and read costs 
from the distribution,
and $\eps \rightarrow 0$ slowly as $n\rightarrow \infty$
(e.g.~$\eps=1/\log n$), so $\eps=o(1)$.
Call intervals $[i,j]$ of length at least $\delta n$ \emph{large}, and the rest \emph{small}.
Say that $\calI$ \emph{behaves} if
$\ell[i,s] + r[s,j] \ge (1-\eps)[(s-i+1)\lbar+(j-s+1)\rbar]$
and $\ell[i,j] \ge (1-\eps)(j-i+1)\lbar$
for every large interval $[i,j]\subseteq[1,n]$ and every $s\in[i,j]$.

\begin{sublemma}\label{lemma:behaves}
  $\calI$ behaves with probability $1-o(n^{-10})$.
\end{sublemma}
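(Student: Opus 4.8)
The plan is a concentration bound for partial sums together with a union bound over intervals. Since $\calI$ consists of $n$ i.i.d.\ pairs from a bounded distribution with $\ell_t/\lbar\le U$ and $r_t/\rbar\le U$, the partial sum $\ell[a,b]$ is a sum of $b-a+1$ independent variables lying in $[0,U\lbar]$, each with mean $\lbar$; likewise $r[a,b]$ is a sum of variables in $[0,U\rbar]$ with mean $\rbar$. A multiplicative Chernoff bound gives
\[
 \Pr\!\big[\ell[a,b]<(1-\eps)(b-a+1)\lbar\big]\ \le\ \exp\!\big(-\eps^2(b-a+1)/(2U)\big),
\]
and the identical bound for $r[a,b]$. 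When $[a,b]$ is large, $b-a+1\ge\delta n$, and since $\delta n\,\eps^2 = 100\,U\log n$ this probability is at most $n^{-50}$. Union-bounding over all $O(n^2)$ intervals $[a,b]$, for both $\ell$ and $r$, shows that with probability $1-o(n^{-10})$ every large $[a,b]$ satisfies both $\ell[a,b]\ge(1-\eps)(b-a+1)\lbar$ and $r[a,b]\ge(1-\eps)(b-a+1)\rbar$. The first of these is the second inequality defining ``behaves''; and for a large $[i,j]$ and a split $s$ for which $[i,s]$ and $[s,j]$ are \emph{both} large, adding the two inequalities yields the first inequality, $\ell[i,s]+r[s,j]\ge(1-\eps)[(s-i+1)\lbar+(j-s+1)\rbar]$.

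So it remains to handle the first inequality for a large $[i,j]$ and a split $s$ for which one side, say $[i,s]$, is short (the case where $[s,j]$ is short is symmetric, interchanging $\ell$ and $r$). When $[i,s]$ is very short, $[s,j]$ is long (length $\ge\delta n-(s-i)$), so its $r$-mass not only concentrates but carries enough \emph{slack} to absorb the entire $\ell$-contribution of the short side: because $\delta n\to\infty$ the target $(1-\eps)[(s-i+1)\lbar+(j-s+1)\rbar]$ lies below $(1-\tfrac\eps2)\,\E[r[s+1,j]]$, so
\[
 \Pr\!\big[\ell[i,s]+r[s,j]<(1-\eps)[(s-i+1)\lbar+(j-s+1)\rbar]\big]\ \le\ \Pr\!\big[r[s+1,j]<(1-\tfrac\eps2)\E[r[s+1,j]]\big]\ \le\ n^{-\Omega(1)},
\]
with a large exponent. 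The cleaner way to cover \emph{all} splits $s$ at once is to apply a one-sided Bernstein inequality to the independent sum $\ell[i,s]+r[s+1,j]\le\ell[i,s]+r[s,j]$, whose variance proxy $U\big((s-i+1)\lbar^2+(j-s)\rbar^2\big)$ automatically discounts a short side (few terms of large range); one then union-bounds over the $O(n^3)$ triples $(i,j,s)$.

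The delicate point --- and the step I expect to be the main obstacle --- is making the tail exponents above be $\Omega(\log n)$ \emph{uniformly over all splits} $s$, including ``intermediate'' ones where $(s-i+1)\lbar$ and $(j-s+1)\rbar$ are of comparable size and neither side's own Chernoff bound is by itself sharp enough to beat, say, $n^{-13}$. Checking this is exactly what dictates that ``large'' must mean length of order $U\log n/\eps^2$, and, strictly, forces the constant hidden in $\delta$ to be chosen large enough relative to the (fixed) ratio $\max(\lbar/\rbar,\rbar/\lbar)$, so that the worst-case exponent --- a fixed multiple of $\tfrac{\min(\lbar,\rbar)}{\max(\lbar,\rbar)}\cdot\eps^2\delta n/U=\Theta(\log n)$ --- still survives the $\mathrm{poly}(n)$ union bound. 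With that in hand, the union of the polynomially many bad events over all large intervals and all splits has probability $o(n^{-10})$, so $\calI$ behaves with probability $1-o(n^{-10})$.
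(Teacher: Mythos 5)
Your final route --- applying a one-sided concentration bound directly to the sum $\ell[i,s]+r[s,j]$ (or the slightly smaller $\ell[i,s]+r[s+1,j]$) and union-bounding over the $O(n^3)$ triples $(i,s,j)$ --- is the paper's proof. Your earlier decomposition (separate per-interval Chernoff bounds for $\ell$ and for $r$, plus a ``slack'' argument when one side of the split is short) has a gap you seem to half-notice: when $s-i+1$ is just below $\delta n$ it is too long for the long side's slack to absorb the short side's mean (that argument requires roughly $(s-i+1)\lbar\lesssim\eps\,(j-s)\rbar$, which fails once $s-i+1$ exceeds about $\eps\,\delta n\,\rbar/\lbar$), yet still too short for its own Chernoff bound to kick in; your Bernstein route covers this, so the proposal as a whole is sound. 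The delicacy you flag at the end is real and is, if anything, a criticism of the paper rather than of your proof: for the mixed sum the natural Chernoff exponent scales like $\eps^2\delta n\cdot\min(\lbar,\rbar)/(U\max(\lbar,\rbar))$, so the paper's one-word ``Likewise'' silently drops a factor of $\min(\lbar,\rbar)/\max(\lbar,\rbar)$. That factor is $\Theta(1)$ for a fixed distribution, so the asymptotic conclusion is unharmed, but to get the literal $n^{-33}$ per-triple rate one must, as you say, either fold that ratio into the constant $100$ in $\delta$ or into the definition of $U$, or use a variance-sensitive (Bernstein-type) bound.
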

\begin{proof}
  This follows from a standard Chernoff bound and the naive union bound. 
  Here are the details.
  Consider any large $[i,j]$ and $s\in[i,j]$.
  By a standard Chernoff bound, using $j-i+1\ge \delta n$,
  \[\Pr[\ell[i,j] \le (1-\eps) (j-i+1)] 
  ~\le~ 
  \exp(-\eps^2 (j-i+1)\lbar / (3 U/\lbar)) 
  ~\le~ 
  \exp(-33 \log n ) 
  ~=~
  n^{-33}.
  \]
  Likewise, \(\Pr[\ell[i,s] + r[s,j] \le (1-\eps) [(s-i+1)\lbar + (j-s+1)\rbar]\)
  is at most $n^{-33}$.
  Since there are at most $n^3$ triples $(i,s,j)$,
  the probability that $\calI$ misbehaves is at most $2n^{-30} = o(n^{-10})$.
\end{proof}

\begin{sublemma}\label{lemma:opt iid}
  For both \latencyBmc and \linearBmc,
  $\E_{\calI}[\opt(\calI)] \sim \opt(\overline\calI)$.
\end{sublemma}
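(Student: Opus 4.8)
The plan is to prove $\E_\calI[\opt(\calI)] \le (1+o(1))\opt(\overline\calI)$ and $\E_\calI[\opt(\calI)] \ge (1-o(1))\opt(\overline\calI)$ separately. For the upper bound I would fix an optimal tree $T'$ for $\overline\calI$; for \latencyBmc, $T'$ has right-depth at most $K-1$, a purely structural property, so $T'$ is a feasible tree for every length-$n$ instance, $\calI$ included. By \Cref{thm:bijection}, $\cost_f(T')(\calI)$ is a fixed nonnegative linear function of the pairs $(\ell_t,r_t)$, so linearity of expectation gives $\E_\calI[\opt(\calI)] \le \E_\calI[\cost_f(T')(\calI)] = \cost_f(T')(\overline\calI) = \opt(\overline\calI)$. (Equivalently: $\opt$ is a minimum of linear functions of the instance, hence concave, and Jensen applies; this direction needs no concentration.)

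For the lower bound it suffices to show that \emph{whenever $\calI$ behaves} --- which by \Cref{lemma:behaves} happens with probability $1-o(1)$ --- one has $\opt(\calI) \ge (1-o(1))\opt(\overline\calI)$, since then (using $\opt\ge 0$) $\E_\calI[\opt(\calI)] \ge \Pr[\calI\text{ behaves}]\cdot(1-o(1))\opt(\overline\calI) = (1-o(1))\opt(\overline\calI)$. So fix a behaving $\calI$, let $T^\star$ be an optimal tree for $\calI$, and for a node $t$ write $[i_t,j_t]$ for the key-interval of $T^\star_t$. Unfolding the recurrence of \Cref{lemma:recurrence} along $T^\star$ (equivalently, \Cref{thm:lower bound}(i) for \linearBmc, and the analogous identity with $f_d(1)=0$ for \latencyBmc) writes
\[
\opt(\calI) \;=\; \cost_f(T^\star)(\calI) \;=\; \sum_{t=1}^n \Big(\ell[i_t,t] + r[t,j_t]\,f_{d(t)}(1)\Big),\qquad d(t):=\rightDepth_{T^\star}(t),
\]
one ``interval term'' per node. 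Call $t$ \emph{heavy} when its relevant interval ($[i_t,t]$ for \latencyBmc, $[i_t,j_t]$ with split $s=t$ for \linearBmc) is large; the defining inequalities of a behaving instance then bound the term of each heavy $t$ below by $(1-\eps)$ times the corresponding term of $\cost_f(T^\star)(\overline\calI)$, namely $(1-\eps)\big(|L_t\cup\{t\}|\,\lbar + |R_t\cup\{t\}|\,\rbar\,f_{d(t)}(1)\big)$. Discarding the nonnegative terms of the non-heavy nodes, and using that $T^\star$ is a feasible tree for $\overline\calI$,
\[
\opt(\calI) \;\ge\; (1-\eps)\big(\cost_f(T^\star)(\overline\calI) - \Delta\big) \;\ge\; (1-\eps)\big(\opt(\overline\calI) - \Delta\big),
\]
where $\Delta$ is the total $\overline\calI$-cost of the nodes lying in the subtrees that hang off the heavy part of $T^\star$. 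Everything then reduces to proving $\Delta = o(\opt(\overline\calI))$.

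These hanging subtrees have sizes $s_1,s_2,\ldots$, each below $\delta n$ (which is only polylogarithmic in $n$) and summing to at most $n$, and by dynamic-programming optimality each is optimal for its own length-$s_k$ subinstance. For \latencyBmc the $\overline\calI$-cost of a subtree is just $\lbar$ times its total left-depth, so the crude bound $\Delta \le \lbar\sum_k s_k^2 \le \lbar(\delta n)n = \tilde{O}(\lbar\,n) = o(\lbar\,n^{1+1/K}) = o(\opt(\overline\calI))$ (using \Cref{lemma:opt uniform}(i)) already suffices --- this case is routine. For \linearBmc that crude bound only gives $\tilde{O}((\lbar+\rbar)n)$, which is \emph{not} $o(\opt(\overline\calI)) = o(\beta n\log n)$, and this is the main obstacle: one must instead show that, with high probability, every length-$s$ subinterval of $\calI$ with $s\le\delta n$ is ``regular'' enough that its optimal tree can be taken to have depth $O(\log s + \log\log n)$, whence $\Delta \le O\big((\lbar+\rbar)\,n\log(\delta n)\big) = O\big((\lbar+\rbar)\,n\log\log n\big) = o(\beta n\log n)$. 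Such regularity can be folded into the definition of ``behaves'' (a non-regular short interval occurs with probability $n^{-\omega(1)}$ and there are only $\le n^2$ short intervals), but ``behaves'' as currently stated controls only large intervals and a naive union bound cannot control all polylog-length intervals at the $(1\pm\eps)$ level --- so ruling out overly unbalanced optimal schedules on short subinstances is exactly the step I expect to require the most care. Combining the estimates, $\opt(\calI)\ge(1-\eps)(1-o(1))\opt(\overline\calI) = (1-o(1))\opt(\overline\calI)$ for behaving $\calI$, which with the upper bound finishes the proof.
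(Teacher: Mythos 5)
Your upper bound is exactly the paper's (an optimal schedule for $\overline\calI$ is a fixed linear function of the instance; take expectations). Your lower bound, however, takes a genuinely different route, and you have correctly diagnosed that it contains a real gap for \linearBmc.

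You start from $T^\star$, the optimal tree for the random instance $\calI$, drop the non-heavy terms, and are left needing $\Delta = o(\opt(\overline\calI))$, where $\Delta$ is the $\overline\calI$-cost of the maximal small subtrees of $T^\star$. As you note, the crude bound $\Delta \le (\lbar+\rbar)\sum_k s_k^2 \le (\lbar+\rbar)\,\delta n\cdot n$ is off by a $\text{polylog}(n)$ factor relative to $\opt(\overline\calI) = \Theta(\beta n\log n)$, because $\delta n = \Theta(U\log^3 n)$. The trouble is that those small subtrees of $T^\star$ are optimal for the \emph{random} subinstances $\calI[i,j]$, not the uniform ones, and nothing in the ``behaves'' event (which governs only large intervals) prevents some of them from being badly unbalanced and hence costly against $\overline\calI$. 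Your proposed patch --- fold a ``regularity'' condition on all short intervals into ``behaves'' and then argue that regular short intervals admit shallow optimal trees --- is plausible but is a nontrivial extra lemma that you have not supplied, and it is exactly the point where your proof is incomplete.

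The paper sidesteps this obstacle entirely by \emph{not} going through $T^\star$. Instead it introduces a deterministic lower-bound recurrence $\lb[i,j]$ that uses the uniform means $\lbar,\rbar$ in place of $\ell,r$ and sets $\lb[i,j]=0$ on small intervals; the ``behaves'' event gives $\opt(\calI)\ge(1-\eps)\lb[1,n]$ directly by induction on the recurrence. It then takes the recursion tree of $\lb[1,n]$ and fills each small-interval hole with the \emph{optimal tree for the uniform subinstance} $\overline\calI[i,j]$, producing a comparison tree $T'$ for $\overline\calI$ with $\opt(\overline\calI)\le\cost(T') = \lb[1,n]+\sum_S \cost(S)$. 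Because each inserted $S$ is uniform-optimal, its cost is $O(\beta\,\delta n\log(\delta n))$ by \Cref{lemma:opt uniform}(ii), so $\sum_S\cost(S) = O(\beta n\log\log n) = o(\opt(\overline\calI))$ --- no regularity of random short intervals is ever needed, and no depth bound on subtrees of $T^\star$ is required. This choice of which tree to build (a surrogate for $\overline\calI$ rather than a truncation of $T^\star$) is the key step your proposal is missing.
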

\begin{proof}
  Let $\overline\sigma$ be an optimal schedule for $\overline\calI$.
  Then 
  $\E[\opt(\calI)] 
  \le \E[\overline\sigma(\calI)] 
  = \overline\sigma(\E[\calI])
  = \overline\sigma(\overline\calI)
  = \opt(\overline\calI)$.
  (The first equality holds by linearity of expectation,
  as $\overline\sigma(\calI)$ is a linear function of $\calI=\langle(\ell_t,r_t)\rangle_t$.)
  This shows $\E[\opt(\calI)] \le \opt(\overline\calI)$.
  It remains to show
  $\E_{\calI}[\opt(\calI)] \ge (1-o(1)) \opt(\overline\calI)$.

  First we prove the claim for \linearBmc.
  For \linearBmc Recurrence~\eqref{eq:recurrence} simplifies to
  \begin{equation}\label{eq:recurrence linearBmc}
    \opt[i,j] = \min_{s=i\ldots j} \opt[i,s-1] +\ell[i,s] + r[s,j] + \opt[s+1,j].
  \end{equation}
  Assume that $\calI$ behaves.  Then (by induction on the recurrences)
  $\opt[i,j] \ge (1-\eps)\lb[i,j]$, where
  \begin{equation}\label{eq:recurrence lb}
    \lb[i,j] = \min_{s=i\ldots j} \lb[i,s-1] + (s-i+1)\, \lbar + (j-s+1)\,\rbar + \lb[s+1,j]
  \end{equation}
  for large intervals $[i,j]$ and $\lb[i,j] = 0$ for small $[i,j]$.
  To finish we show $\lb[1,n] \ge (1-o(1))\opt(\overline \sigma)$.
  Let $T$ be the recursion tree for Recurrence~\eqref{eq:recurrence lb} for $\lb[1,n]$,
  interpreted as a binary search tree on keys $[1,n]$ as in the proof of Thm.~\ref{thm:bijection}.
  In $T$, for each maximal subtree $S$ whose interval $[i,j]$ is small,
  replace $S$ by the optimal subtree for $\overline\calI[i,j]$.
  Let $T'$ be the resulting tree.
  Using $T'$ as a solution (schedule) for $\opt(\overline\calI)$,
  and letting $S$ range over the subtrees introduced into $T'$,
  \(
  \opt(\overline\calI) \le \cost(T') = \lb[1,n] + \sum_{S} \cost(S). 
  \)

  The number of subtrees $S$ is at most $n/\delta n = 1/\delta$.
  Each has $\cost(S) = O(\beta \,\delta\, n \log(\delta n))$ (\Cref{thm:average case}(ii)),
  so
  $\sum_{S} \cost(S)$ is $O((1/\delta) (\beta\,\delta\, n\log(\delta n)))$,
  which is $o(\opt(\overline\calI))$, as $\delta n = \log^{O(1)} n$.

  Hence
  $\E_\calI[\opt(\calI)]\ge \Pr[\calI \text{ behaves}] (1-o(1)) \opt(\overline\calI) \sim \opt(\overline\calI)$.

  To finish, we prove the claim for \latencyBmc.
  We show 
  $\E_{\calI}[\opt(\calI)] \ge (1-o(1)) \opt(\overline\calI)$ for \latencyBmc. 
  The idea is the same as for \linearBmc.
  Define $\lb_{0}[1,n]$ by recurrence
  \[
  \lb_d[i,j] = 
  \min_{s=i\ldots j} \lb_d[i,s-1] + \lb_{d+1}[s+1,j]
  + \begin{cases}
    \ell[i,s]& \text{if $[i,s]$ large}, \\
    0 & \text{otherwise},
  \end{cases}
  \]
  for $d< K$ and $[i,j]$ large,
  while $\lb_{K}[i,j] = \infty$ for $i\le j$,
  and otherwise $\lb_d[i,j]=0$ for $[i,j]$ small.
  As in the proof sketch,
  if $\calI$ behaves, then $\opt(\calI) \ge (1-\eps)\lb_0[1,n]$.
  Let $T$ be the recurrence tree for $\lb_0[1,n]$.
  Interpret $T$ as a solution for $\overline\calI$,
  and, for each maximal subtree $S$ for a subproblem $\overline\calI_d[i,j]$ 
  where $[i,j]$ is small, replace $S$ by the optimal subtree for $\calI_d[i,j]$.
  Call the resulting tree $T'$.
  Then, interpreting $T'$ as a solution for $\overline\calI$,
  and letting $S$ range over the subtrees introduced into $T'$,
  $\opt(\overline\calI) \le \cost(T') \le \lb_0[1,n] + 2\sum_S \cost(S)$.
  
  (The factor of 2 accounts for each term $\ell[i,s]$ that can be ``missing''
  for the parent of each subtree $S$, in the recurrence for $\lb_d[i,j]$.)
  There are at most $n/\delta n = 1/\delta$ subtrees $S$,
  each with $\cost(S) = O((\delta n)^2)$, so
  $\sum_S \cost(S)$ is $O(\delta n^2) = O(n\log^{O(1)} n) = o(\opt(\overline\calI))$.
\end{proof}

Finally we prove \Cref{thm:average case}.

\begin{proof}
  First consider the case when $n$ and the distribution $p$ are known.
  On input $\calI$,
  have $A$ ignore the input, and do merges exactly as $\opt(\overline\calI)$ would.
  Then as a function of the input vector $\calI$, 
  the function $\calI\mapsto A(\calI)$ is linear.
  By linearity of expectation, 
  $\E[A(\calI)] = A(\overline \calI) = \opt(\overline \calI)$,
  which asymptotically equals $\E[\opt(\calI)]$ by \Cref{lemma:opt iid}.

  To handle the case when $p$ and $n$ are not known,
  use the fact that the optimal schedule for $\overline\calI$
  depends only on two parameters: $\lbar$ and $\rbar$.
  At each time $t$ that is a power of two, start a new \emph{phase}:
  merge all files into one file $F$,
  then, during the phase $[t,2t-1]$ ignore $F$ completely and
  follow the optimal schedule for $(\lbar',\rbar')^t$,
  where $\lbar'$ and $\rbar'$ are the average file length and read rate so far.

  The total cost for the merges at the start of each phase 
  and for the bottom stack slot is $O(\ell[1,n] + r[1,n]) = o(\opt(\overline\calI))$.
  We bound the remaining cost.
  Take $\delta$, $\eps$, and $U$ as earlier defined.
  The cumulative cost of the online algorithm
  through the phase containing time $\delta n$ 
  is $O(U\lbar(\delta n)^2) = o(\opt(\overline \calI))$
  (using $\delta n = O(\log n)$ and $\opt(\overline\calI) = \Omega(n\log n)$).
  After that time, with high probability,
  the estimates of $\lbar$ and $\rbar$ are all $(1\pm\eps)$-accurate,
  so, phase by phase, the expected cost of the online algorithm 
  tracks the cost of $\opt((\lbar,\rbar)^t)$ within a $1+o(1)$ factor.
  (To handle phase $[t,2t-1]$,
  the algorithm follows a static schedule, say $\sigma$, for $\opt((\lbar',\rbar')^{t})$,
  and incurs expected cost
  $\sigma((\lbar,\rbar)^{t})
  \,\le\, (1+\eps) \sigma((\lbar',\rbar')^{t})
  \,\le\, (1+\eps)\opt((\lbar',\rbar')^{t})
  \,\le\, (1+\eps)^2\opt((\lbar,\rbar)^{t})$.)
  Hence, the expected cost of the algorithm after the phase containing time $\delta n$
  is $(1+o(1))\opt(\overline\calI) = (1+o(1))\E[\opt(\calI)]$.
\end{proof}


\section{Benchmarks}\label{sec:benchmarks}

For \latencyBmc, we test \brb and Google's {\sc Default} algorithm
(\emph{merge minimally, subject to the constraint that each file 
remains as large as all files above it combined}).
For \linearBmc we test the algorithms from \Cref{thm:linear worst case}
and \Cref{thm:average case}.
The inputs are sequences 
with read costs i.i.d.~from an exponential distribution
and file lengths i.i.d.~from a log-normal distribution.
We let $\mu$ and $v$ denote the mean and variance of the underlying normal distribution.
When computationally feasible, we also test \opt.
Each plot plots average cost \emph{per time step} (that is, total cost divided by $n$) versus $n$,
for several algorithms on one input.

\paragraph{Results for \latencyBmc.}
Recall that for \latencyBmc, we expect \opt to cost about $\lbar\, K n^{1/K}/e$ (per time step).
We hope that \brb costs about the same.
On uniform instances, by calculation {\sc Default} costs about $\lbar\, n /(2\cdot 3^{K-1})$ per time step.
We expect {\sc Default} to have roughly this cost on i.i.d.~instances as well.
As a consequence, we expect that \brb should substantially outperform {\sc Default}
for large $n$, say, for $n \ge K 3^K$. We do see this.  
We also see that, in general, \brb is close to \opt, and better than {\sc Default} even for small $n$.
See Fig.~\ref{fig:p1} for an example.

\newcommand{\tmpH}{2.6in}
\newcommand{\tmpG}{2.6in}

\begin{figure}
  \vspace*{-.1in}
  \begin{subfigure}{.5\textwidth}
    \centering
    \includegraphics[width=2.7in,height=\tmpH]{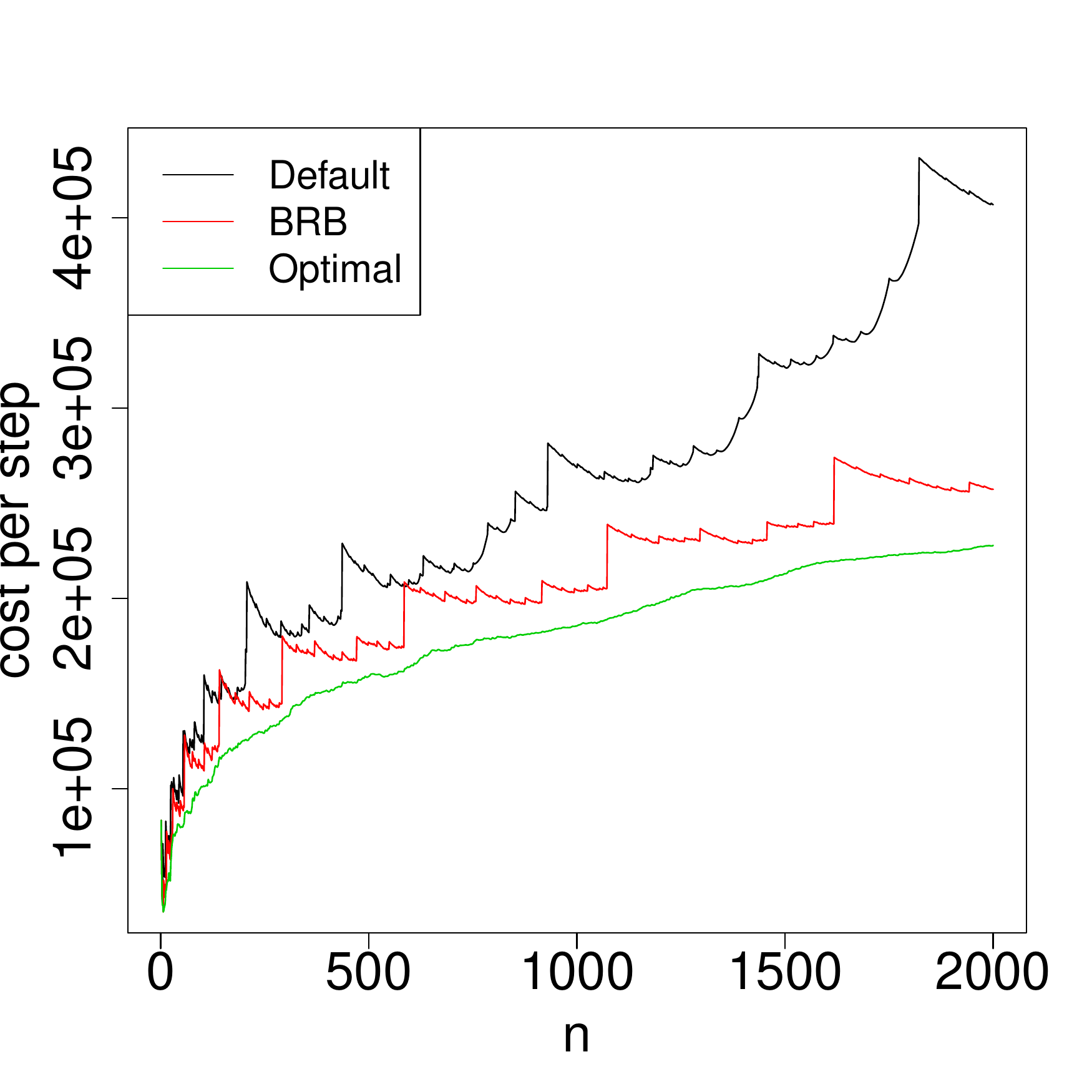}
    \caption{\LatencyBmc with $K = 5, n\le 2000$}\label{fig:p1 left}
  \end{subfigure}%
  \begin{subfigure}{.5\textwidth}
    \centering
    \includegraphics[width=2.7in,height=\tmpH]{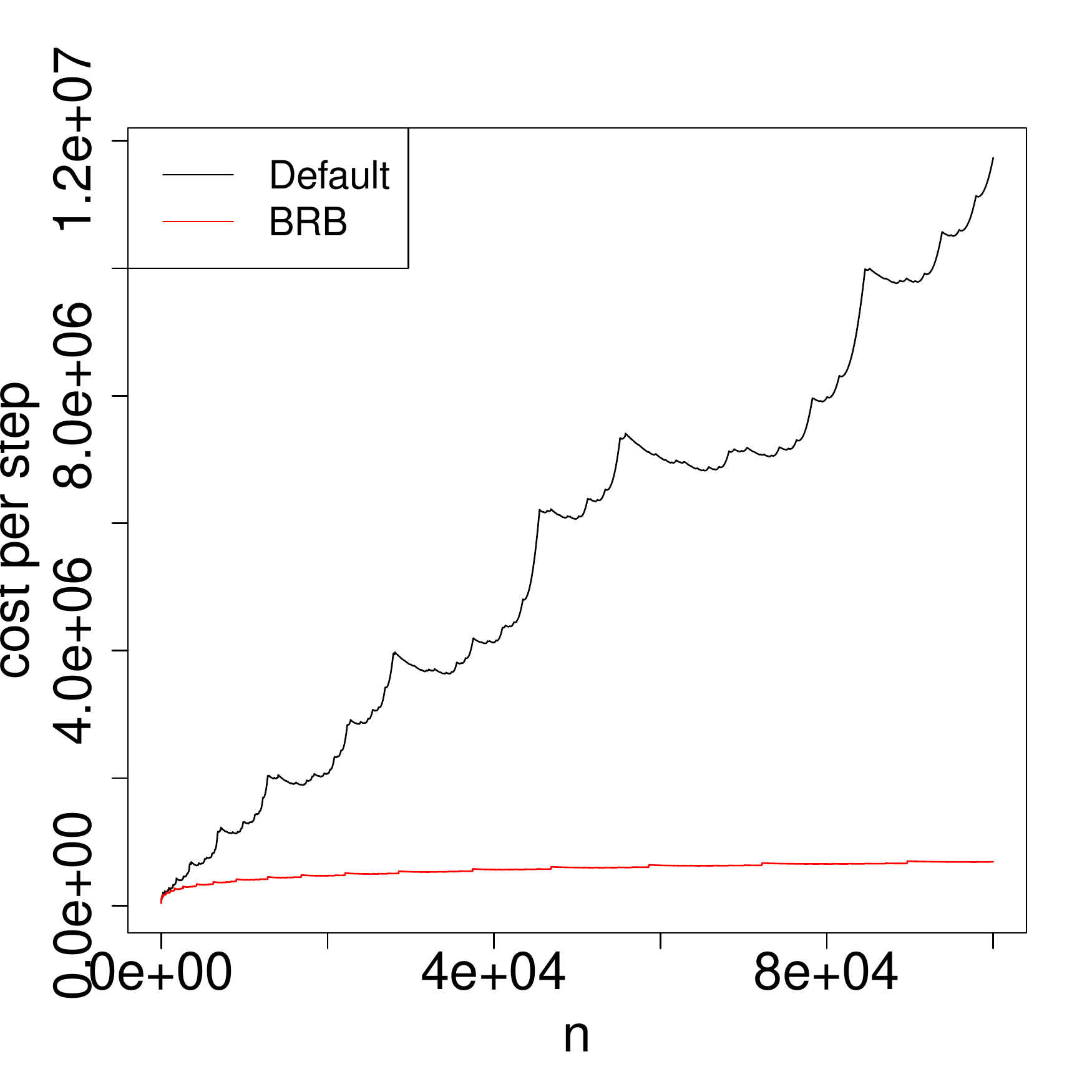}
    \caption{\LatencyBmc with $K = 5, n\le 100,000$}\label{fig:p1 right}
  \end{subfigure}
  \caption{An instance with $\mu = 10$, $v=1$, so typically $\ell_t\in[e^9,e^{11}]$.}\label{fig:p1}
\end{figure}

\paragraph{Results for \linearBmc.}
Recall that for \linearBmc, we expect \opt to cost about $\beta \log n$ (per time step),
where $1/2^{\lbar/\beta} + 1/2^{\rbar/\beta} = 1$.
We hope that our online algorithms achieve cost near this.
(We know that the \linearBmc algorithm from \Cref{thm:average case} does \emph{asymptotically}.)
We find that they do, even for small $n$, except that when $\lbar/\rbar$ is large,
the algorithm from \Cref{thm:linear worst case} doesn't do as well.
See Fig.~\ref{fig:p2} for an example.

\begin{figure}
  \vspace*{-.1in}
  \begin{subfigure}{.5\textwidth}
    \centering
    \includegraphics[width=2.7in,height=\tmpG]{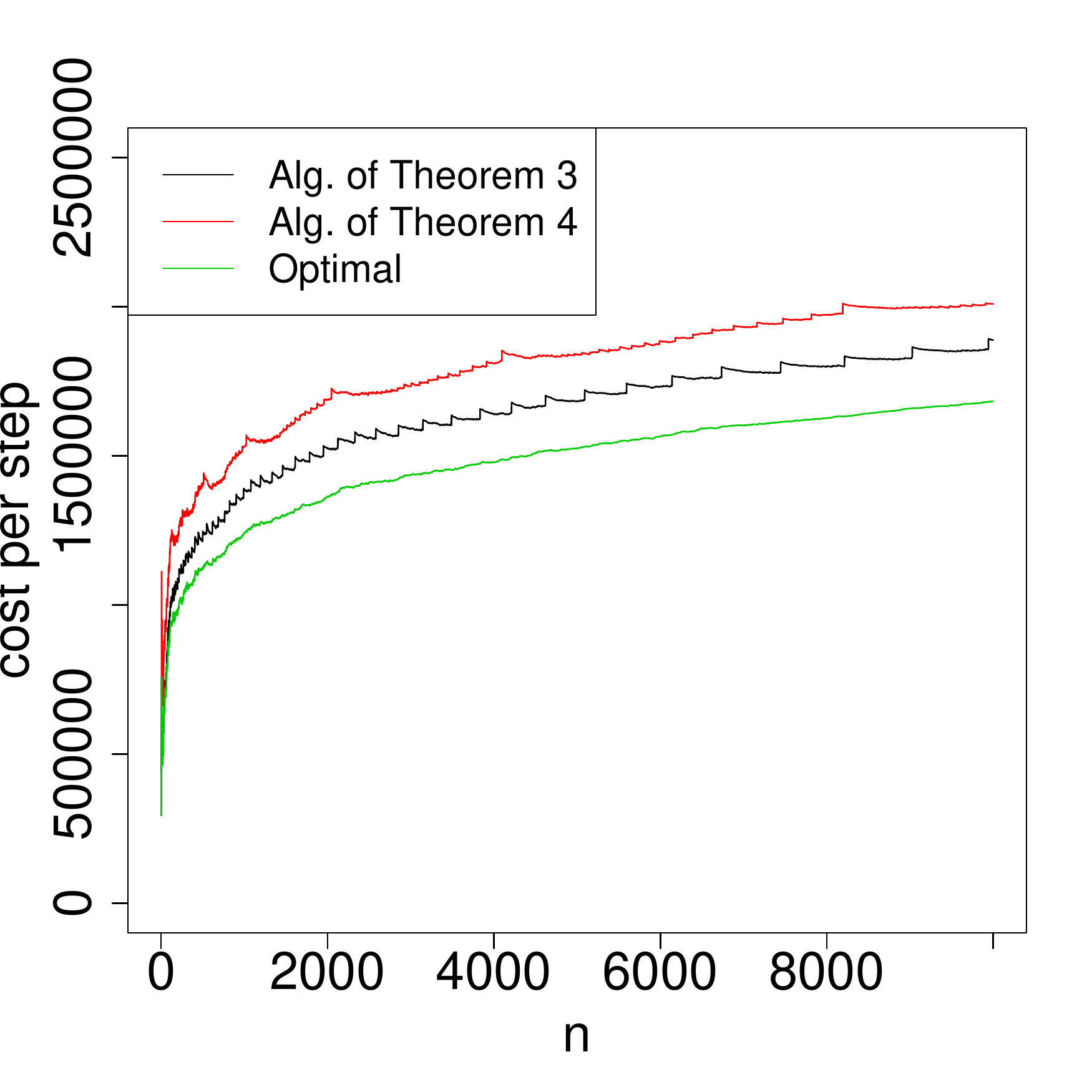}
    \caption{\LinearBmc with $\lbar/\rbar = .1$, $n\le 10,000$}\label{fig:p2 right}
  \end{subfigure}
  \begin{subfigure}{.5\textwidth}
    \centering
    \includegraphics[width=2.7in,height=\tmpG]{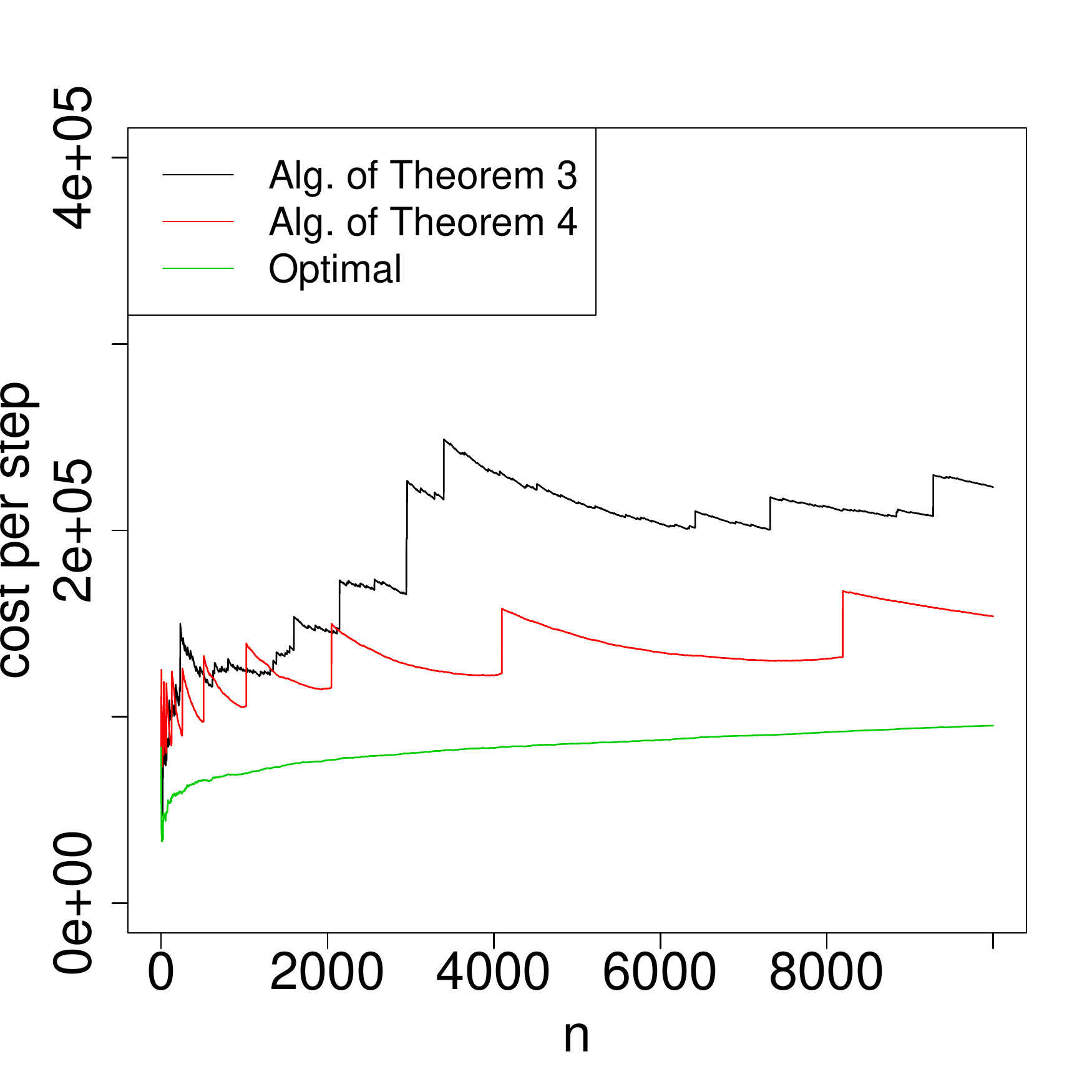}
    \caption{\LinearBmc with $\lbar/\rbar = 100$, $n\le 10,000$}\label{fig:p2 left}
  \end{subfigure}%
  \caption{Instances with $\mu=10$, $v=1$, and (a) $\lbar/\rbar$ small, and (b) $\lbar/\rbar$ large.}\label{fig:p2}
\end{figure}

\section{Acknowledgements}
Thanks to Mordecai Golin and Vagelis Hristidis for useful discussions.

{\small
\bibliographystyle{abbrv}
\bibliography{main}
}

\end{document}